\documentclass[journal,onecolumn,12pt,draftclsnofoot]{IEEEtran}
\usepackage{amsmath}
\usepackage{mathrsfs}

\usepackage{amsfonts}

\usepackage{graphicx}
\usepackage{citesort}
\usepackage{amssymb}
\usepackage{amsfonts}
\usepackage{amsmath}
\usepackage{epsfig}
\usepackage{color}
\usepackage{fancybox}
\usepackage{textcomp}
\usepackage{multirow}
\usepackage{setspace}
\usepackage{psfrag}

\newtheorem{theorem}{Theorem}
\newtheorem{lemma}{Lemma}

\newtheorem{assumption}{Assumption}

\newtheorem{corollary}{Corollary}
\newtheorem{proposition}{Proposition}

    \def\Complex{{\rm\rule[.23ex]{.03em}{1.1ex}\kern-.3em{C}}}

    \newcommand{\be}{\begin{equation}} \newcommand{\ee}{\end{equation}}
    \newcommand{\bea}{\begin{eqnarray}} \newcommand{\eea}{\end{eqnarray}}
    \newcommand{\benum}{\begin{enumerate}} \newcommand{\eenum}{\end{enumerate}}


    \newcommand{\qf}{{\bf f}}
    \newcommand{\qg}{{\bf g}}
    \newcommand{\qh}{{\bf h}}

    \newcommand{\qq}{{\bf q}}
    
    \newcommand{\qs}{{\bf s}}

    \newcommand{\qw}{{\bf w}}
    
    \newcommand{\qy}{{\bf y}}
    \newcommand{\qz}{{\bf z}}

    \newcommand{\qA}{{\bf A}}
    \newcommand{\qB}{{\bf B}}
    
    \newcommand{\qD}{{\bf D}}
    
    \newcommand{\qF}{{\bf F}}
    \newcommand{\qG}{{\bf G}}
    \newcommand{\qH}{{\bf H}}
    \newcommand{\qI}{{\bf I}}

    \newcommand{\qP}{{\bf P}}
    \newcommand{\qQ}{{\bf Q}}
    \newcommand{\qR}{{\bf R}}
    
    \newcommand{\qT}{{\bf T}}

    \newcommand{\qW}{{\bf W}}
    \newcommand{\qX}{{\bf X}}
    \newcommand{\qY}{{\bf Y}}
    
    \newcommand{\qzero}{{\bf 0}}

    \newcommand{\te}{{\tilde{e}}}

    \newcommand{\tqH}{{\tilde{\qH}}}
    \newcommand{\tqF}{{\tilde{\qF}}}
    \newcommand{\tqW}{{\tilde{\qW}}}

    \newcommand{\tqh}{{\tilde{\qh}}}
    \newcommand{\tqf}{{\tilde{\qf}}}
    
    \newcommand{\tqw}{{\tilde{\qw}}}

    \newcommand{\bbR}{{\mathbb R}}
    \newcommand{\bbC}{{\mathbb C}}

    \newcommand{\calN}{{\mathcal N}}

    \newcommand{\diag}{{\sf diag}}
    
    \newcommand{\tr}{{\sf tr}}
    \newcommand{\Ex}{{\sf E}}

    \newcommand{\argmax}{\operatornamewithlimits{arg\, max}}





\begin{document}

\title{Large System Analysis of Cognitive Radio Network via Partially-Projected Regularized Zero-Forcing Precoding}

%
%
%

\author{Jun Zhang\thanks{J. Zhang is with Jiangsu Key Laboratory of Wireless Communications, Nanjing University of Posts and Telecommunications, Nanjing 210003, P. R. China, and also with Singapore University of Technology and Design, Singapore 487372, Email address: {\sf zhang\_jun@sutd.edu.sg}.},  Chao-Kai Wen\thanks{C. K. Wen is with the Institute of Communications Engineering, National Sun Yat-sen University, Kaohsiung 804, Taiwan, E-mail address: {\sf ckwen@ieee.org}.}, Chau Yuen\thanks{C. Yuen is with Singapore University of Technology and Design, Singapore 487372, Email address: {\sf yuenchau@sutd.edu.sg}.}, Shi Jin\thanks{S. Jin and X. Q. Gao are with the National Mobile Communications Research Laboratory, Southeast University, Nanjing 210096, China, E-mail addresses: {\sf \{jinshi, xqgao\}@seu.edu.cn}.}, and Xiqi Gao}

\maketitle

\begin{abstract}


In this paper, we consider a cognitive radio (CR) network in which a secondary multiantenna base station (BS) attempts to communicate with multiple secondary
users (SUs) using the radio frequency spectrum that is originally allocated to multiple primary users (PUs). Here, we employ partially-projected regularized
zero-forcing (PP-RZF) precoding to control the amount of interference at the PUs and to minimize inter-SUs interference. The PP-RZF precoding
partially projects the channels of the SUs into the null space of the channels from the secondary BS to the PUs. The regularization parameter and the projection
control parameter are used to balance the transmissions to the PUs and the SUs. However, the search for the optimal parameters, which can maximize the ergodic
sum-rate of the CR network, is a demanding process because it involves Monte-Carlo averaging. Then, we derive a deterministic expression for the ergodic sum-rate
achieved by the PP-RZF precoding using recent advancements in large dimensional random matrix theory. The deterministic equivalent enables us to efficiently
determine the two critical parameters in the PP-RZF precoding because no Monte-Carlo averaging is required. Several insights are also obtained through the analysis.

\end{abstract}

\begin{IEEEkeywords}
 Cognitive radio network, ergodic sum-rate, regularized zero-forcing, deterministic equivalent.
\end{IEEEkeywords}

\section{Introduction}

The radio frequency spectrum is a valuable but congested natural resource because it is shared by an increasing number of users. Cognitive radio (CR) \cite{Mitola-99PCM,Letaief-09ProcIEEE,Goldsmith-09IProc,Liang-11TVT} is viewed as an effective means to improve the utilization of the
radio frequency spectrum by introducing dynamic spectrum access technology. Such technology allows secondary users (SUs, also known as CR users) to access the
radio spectrum originally allocated to primary users (PUs). In the CR literature, two cognitive spectrum access models have been widely adopted
\cite{Liang-11TVT}: 1) the \emph{opportunistic spectrum access}  model and 2) the \emph{concurrent spectrum access} model. In the opportunistic spectrum access
model, SUs carry out spectrum sensing to detect spectrum holes and reconfigure their transmission to operate only in the identified holes
\cite{Mitola-99PCM,Kim-11TVT}. Meanwhile, in the concurrent spectrum access model, SUs transmit simultaneously with PUs as long as interference to PUs is limited
\cite{Zheng-09TSP,Huang-11JSAC}.

In this paper, we focus on the concurrent spectrum access model particularly when the secondary base station (BS) is equipped with multiple antennas. A desirable condition in the concurrent spectrum access model is for SUs to maximize their own performance while minimizing the interference caused to the PUs.
Several transmit schemes have been studied to balance the transmissions to the SUs and the PUs \cite{ZhangR-08JSTSP,Hamdi-09TWC,HeYY-13TWC,HeYY-13ICASSP,Chen-13TVT}. In \cite{ZhangR-08JSTSP}, a transmit algorithm has been proposed based on the singular value decomposition of the secondary channel after the projection into the null
space of the channel from the secondary BS to the PUs. A spectrum sharing scheme has been designed for a large number of SUs \cite{Hamdi-09TWC}, in which the
SUs are pre-selected so that their channels are nearly orthogonal to the channels of the PUs. Doing so ensures that the SUs cause the lowest interference to the PUs.

In multi-antenna and multiuser downlink systems, a common technique to mitigate the multiuser interference is a zero-forcing (ZF) precoding
\cite{Caire-03TIT,Samardzija-07ICC,Irmer-09COMMag,Suraweera-13ICC}, which is computationally more efficient than its non-linear alternatives. However, the achievable rates of the ZF precoding are severely compromised when the channel matrix is ill conditioned. Then, regularized ZF (RZF) precoding \cite{Joham-02ISSSTA,Peel-05Tcom} is proposed to mitigate the ill-conditioned problem by employing a regularization parameter in the channel inversion. The regularization parameter can control the amount of introduced interference. Several applications based on the RZF framework have been developed, such as transmitter designs for \emph{non}-CR broadcast systems \cite{Nguyen-08GLCOM,Muharar-11ICC,Wagner-12IT,Muharar-13TCom}, security systems \cite{Geraci-13JSAC,ZhangJun-14CL}, and multi-cell cooperative systems
\cite{Muharar-12ISIT,Huang-13TWC,ZhangJun-13TWC,WenCK-14TWC}.

While directly applying RZF to CR networks, the secondary BS can \emph{only} control the interference in inter-SUs. A \emph{partially-projected} RZF (PP-RZF)
precoding has been proposed \cite{HeYY-13TWC,HeYY-13ICASSP}, which limits the interference from the SUs to the PUs by combining the RZF
\cite{Joham-02ISSSTA,Peel-05Tcom} with the channel projection idea \cite{ZhangR-08JSTSP}. The PP-RZF precoding follows the classical RZF technique, although the
former is based on the partially-projected channel, which is obtained by partially projecting the channel matrix into the null space of the channel from the
secondary BS to the PUs. The amount of interference to the PUs decreases with increasing amounts of projection into the null space of the PUs, which can be
achieved by tuning the projection control parameter. However, the search for the optimal regularization parameter and projection control parameter is a demanding
process because it involves Monte-Carlo averaging. Therefore, a deterministic (or large-system) approximation of the signal-to-interference-plus-noise ratio (SINR)
for the PP-RZF scheme has been derived \cite{HeYY-13TWC,HeYY-13ICASSP}. Unfortunately, only the CR channel with a \emph{single} PU has been studied and the scenario where \emph{multiple} PUs are present remains unsolved \cite{HeYY-13TWC}.

To apply the PP-RZF precoding scheme in a CR network with \emph{multiple} PUs, a new analytical technique that deals with a \emph{multi}-dimensional random projection matrix, which is generated by partially projecting the channel matrix into the null spaces of \emph{multiple} PUs, is required. This paper aims to address the above mentioned challenge by providing analytical results in a more general setting than that in \cite{HeYY-13TWC,HeYY-13ICASSP}. Specifically, we focus on a downlink multiuser CR network (Fig.~\ref{fig:1}), which consists of a secondary BS with multiple antennas, SUs, and PUs as well as different channel gains. Our main contributions are summarized below.

\begin{itemize}
\item We derive deterministic equivalents for the SINR and the ergodic sum-rate achieved by the PP-RZF precoding under the general CR network. Unlike previous works \cite{HeYY-13TWC,HeYY-13ICASSP}, our model considers \emph{multiple} PUs and allows different channel gains from the secondary BS to each user. Owing to recent advancements in large dimensional random matrix theory (RMT) with respect to complex combinations of different types of
    independent random matrices \cite{Couillet-11BOOK}, we identify the large system distribution of the Stieltjes transform for a new class of random matrix. Therefore, our extension becomes non trivial and novel.

\item In the PP-RZF precoding, the regularization parameter and the projection control parameter can regulate the amount of interference to the SUs and the PUs, but a wrong choice of parameters can considerably degrade the performance of the CR network. However, the search for the optimal parameters is a demanding process because Monte-Carlo averaging is required. We overcome the fundamental difficulty of applying PP-RZF precoding in the CR network. The deterministic equivalent for the ergodic sum-rate provides an efficient way of finding the asymptotically optimal regularization parameter and the asymptotically optimal projection control parameter. Simulation results indicate good agreement with the optimum in terms of the ergodic sum-rate.

\item We provide several useful observations on the condition that the regularization parameter and the projection control parameter can achieve the optimal sum-rate. We also reveal the relationship between the parameters and the signal-to-noise ratio (SNR).
\end{itemize}

\emph{Notations}---We use uppercase and lowercase boldface letters to denote matrices and vectors, respectively.
An $N \times N$ identity matrix is denoted by ${\bf{I}}_N$, an all-zero matrix by ${\bf 0}$, and an all-one matrix by ${\bf 1}$.
The superscripts $(\cdot)^{H}$, $(\cdot)^{T}$, and $(\cdot)^{*}$ denote the conjugate transpose, transpose, and conjugate operations, respectively. $\Ex\{\cdot\}$ returns the expectation with respect to all random variables within the bracket, and $\log(\cdot)$ is the natural logarithm.
We use $[{\bf A}]_{kl}$, $[{\bf A}]_{l,k}$, or $A_{kl}$ to denote the ($k$,$l$)-th entry of the matrix $\bf A$, and $a_k$ denotes the $k$-th entry of the column vector $\bf{a}$.
The operators $(\cdot)^{\frac{1}{2}}$, $(\cdot)^{-1}$, ${\tr}(\cdot)$, and $\det(\cdot)$ represent the matrix principal square root, inverse, trace, and determinant, respectively, $\|\cdot\|$ represents the Euclidean norm of an input vector or the spectral norm of an input matrix, and $\diag(\bf{x})$ denotes a diagonal matrix with $\bf{x}$ along its main diagonal. The notation ``$\xrightarrow{a.s.}$'' denotes the almost sure (a.s.) convergence.

\section{System Model and Problem Formulation}
\subsection{System Model}

\begin{figure}
\begin{center}
\resizebox{4.5in}{!}{%
\includegraphics*{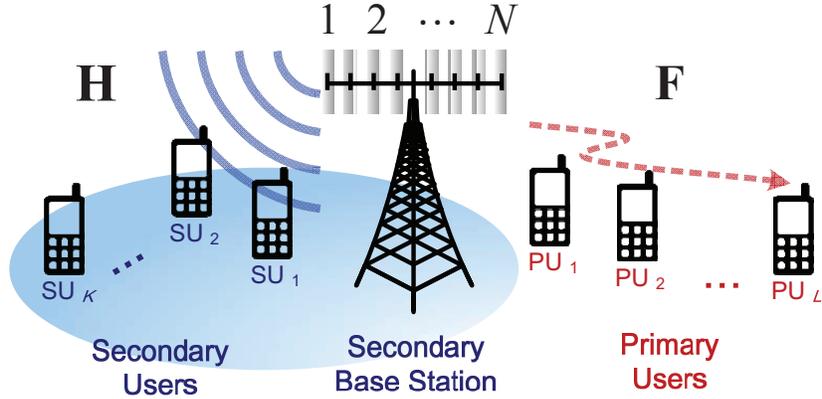} }%
\caption{A downlink multiuser cognitive radio network.}\label{fig:1}
\end{center}
\end{figure}

As illustrated in Fig.~\ref{fig:1}, we consider a downlink multiuser CR network that consists of a secondary BS with $N$ antennas (labeled as ${\sf BS}$). The ${\sf BS}$ simultaneously transmits $K$ independent messages to $K$ single antenna SUs (labeled as ${\sf SU}_1, \dots, {\sf SU}_K$). We assume that
all the SUs share the same spectrum with $L$ single antenna PUs (labeled as ${\sf PU}_1, \dots, {\sf PU}_L$). Let $\qh_k^H \in \bbC^{1 \times N}$ be the
fading channel vector between ${\sf BS}$ and ${\sf SU}_k$,  $\qf_{l}^H \in \bbC^{1 \times N}$ be the fading channel vector between ${\sf BS}$ and ${\sf PU}_l$, and
$\qg_{k} \in \bbC^{N \times 1}$ be the precoding vector of ${\sf SU}_k$. The received signal at ${\sf SU}_k$ can therefore be expressed as
\begin{equation}\label{eq:the received signal of SUEk}
 y_k = \qh_k^H \qg_k s_k + \sum_{j=1, j \ne k}^K \qh_k^H \qg_j s_j + z_k,
\end{equation}
where $s_k$ is the data symbol of ${\sf SU}_k$, $s_j$'s are independent and identically distributed (i.i.d.) data symbols with zero mean and unit variance, respectively,
and $z_k$ is the additive Gaussian noise with zero mean and variance of $\sigma^2$. For ease of exposition, we define $\qH \triangleq \left[ \qh_1,
\dots, \qh_K \right]^H \in \bbC^{K\times N}$, $\qF \triangleq \left[ \qf_1, \dots, \qf_L \right]^H \in \bbC^{L\times N}$, $\qG \triangleq \left[ \qg_1, \dots,
\qg_K \right] \in \bbC^{N \times K}$, $\qy\triangleq[ {y_1}, \dots ,{y_K}]^T \in \bbC^K$, $\qs \triangleq \left[{s_1}, \dots ,{s_K} \right]^T \in \bbC^K$, and $\qz
\triangleq \left[ {z_1}, \dots ,{z_K} \right]^T \in \bbC^K$. The received signal of all the SUs in vector form is given by
\begin{equation}\label{eq:the concatenated received signal vector}
 \qy= \qH \qG \qs + \qz.
\end{equation}
We also assume that ${\sf BS}$ satisfies the average total transmit power constraint
\begin{equation}\label{eq:base station transmitted power constrain}
 \Ex \left\{ \tr\left( \qG \qG^H \right) \right\} \leq {N P_T},
\end{equation}
where $P_T > 0$ is the parameter that determines the power budget of ${\sf BS}$. Notably, if we consider the instantaneous transmit power constraint, i.e., $\tr( \qG \qG^H ) \leq {N P_T}$, we can obtain the same constraint in a large-system regime, as shown in Appendix B-III.

The peak received interference power constraint or the average received interference power constraint is used to protect the PUs. Given that
the latter is more flexible for dynamically allocating transmission powers over different fading states than the former
\cite{ZhangR-09TWC,Wang-09TWC}, we employ the average received interference power constraint and consider two cases: Case I---the average received
interference power constraint at each PU and Case II---the total average received interference power constraint at all PUs\footnote{Notably, multiple single-antenna PUs exist. These PUs can also be considered a single equivalent PU with multiple receive antennas.}. These cases are respectively given by
\begin{subequations} \label{eq:the total received interference constrain}
\begin{align}
  &\mbox{Case I (Per PU power constraint):~~~~} \Ex \left\{ \qf_l^H \qG \qG^H \qf_l \right\} \leq P_l, \mbox{~~for}~~ l = 1,\ldots,L, \label{eq:the total received interference constrain 1}\\
  &\mbox{Case II (Sum power constraint):~~~~}\Ex \left\{ \tr\left( \qF \qG \qG^H \qF^H \right) \right\} \leq P_{\rm all},\label{eq:the total received interference constrain 2}
\end{align}
\end{subequations}
where $P_l > 0$ denotes the interference power threshold of ${\sf PU}_l$, and $P_{\rm all} > 0$ represents the total interference power threshold of all PUs.
We then set $P_l = \theta_l P_T$ and $P_{\rm all} = \theta_{\rm all} P_T$ with $\theta_l, \theta_{\rm all}$ being positive scalar parameters to make a
connection with the transmit power. Although we only consider equal power allocation for simplicity in this paper, our framework can be easily extended to
arbitrary power allocation by replacing $\qG$ with $\qG\qP^{\frac{1}{2}}$, where $\qP = \diag(p_1,\ldots, p_K)$ with $p_k\geq 0$ being the signal power of
${\sf SU}_k$ (see \cite{Wagner-12IT,Muharar-13TCom} for a similar application).

Next, to incorporate path loss and other large-scale fading effects, we model the channel vectors by
\begin{align}
 \qh_k^H & = \sqrt{r_{1,k}} \, \tqh_k^H \mbox{~~and}~~ \qf_l^H = \sqrt{r_{2,l}} \,  \tqf_l^H, 
\end{align}
where $\tqh_k^H$ and $\tqf_l^H$ are the small-scale (or fast) fading vectors, and $r_{1,k}$ and $r_{2,l}$ denote the large-scale fading coefficients (or
channel path gains), including the geometric attenuation and shadow effect. Using the above notations, the concerned channel matrices can be rewritten as
\begin{align} \label{eq:def_tilde_HandF}
 \qH & = \qR_{1}^{\frac{1}{2}} \tqH \mbox{~~and}~~\qF = \qR_{2}^{\frac{1}{2}} \tqF, 
\end{align}
where $\tqH \equiv [\frac{1}{\sqrt{N}} \tilde{h}_{ij}] \in \bbC^{K \times N}$ and $\tqF \equiv [\frac{1}{\sqrt{N}} \tilde{f}_{ij}] \in \bbC^{L \times N}$ consist
of the random components of the channel in which $\tilde{h}_{ij}$'s and $\tilde{f}_{ij}$'s are i.i.d.~complex random variables with zero mean and unit variance, respectively, and $\qR_1 \in \bbC^{K \times K}$ and $\qR_2 \in \bbC^{L \times L}$ are diagonal matrices whose diagonal elements are given by $[\qR_1]_{kk} = r_{1,k}$ and $[\qR_2]_{ll} = r_{2,l}$, respectively. In line with \cite{HeYY-13TWC,HeYY-13ICASSP}, we assume that $\qH$ is perfectly known to ${\sf BS}$ in this paper. Since ${\sf BS}$ needs to predict the interference power in \eqref{eq:the total received interference constrain}, we further assume that perfect knowledge of $\qF$ is available at ${\sf BS}$ \cite{LZhang-09TWC,HeYY-13TWC,HeYY-13ICASSP}. To acquire perfect channel state information (CSI) for $\qH$ and $\qF$, transmission protocols need to incorporate certain cooperation among the PUs, the SUs, and ${\sf BS}$ \cite{LZhang-09TWC}. Further research can focus on the case with imperfect CSI or estimation of channel \cite{Dai-13JSAC,Gao-14CL}.

In the downlink CR network \eqref{eq:the concatenated received signal vector}, we consider the RZF precoding because this precoding's relatively low complexity compared with dirty paper coding \cite{Joham-02ISSSTA,Peel-05Tcom,Wagner-12IT,ZhangJun-13TWC}. However, a direct application of the conventional RZF to the secondary
BS will result in a very inefficient transmission because a large power back-off at the secondary BS is required to satisfy the interference power constraint \eqref{eq:the total received interference constrain}. Therefore, following \cite{HeYY-13TWC,HeYY-13ICASSP}, we adopt the RZF precoding based on the \emph{partially-projected} channel matrix
\begin{equation} \label{eq:checkH}
    \check{\qH} =\qH(\qI_N-\beta\qW^H\qW),
\end{equation}
where $\qW \triangleq (\qF\qF^H)^{-\frac{1}{2}} \qF \in \bbC^{L \times N}$, and $\beta \in [0,1]$ is the projection control parameter. Note that the projected
channel matrix $\check{\qH}$ is obtained by \emph{partially} projecting $\qH$ into the null space of $\qF$. Specifically, the RZF precoding matrix is given by
\begin{equation}\label{eq:the RZF precoding}
 \qG =  \xi \left( \check{\qH}^H \check{\qH} + \alpha \qI_N \right)^{-1} \check{\qH}^H,
\end{equation}
where $\xi$ is a normalization parameter that fulfills the BS transmit power constraint \eqref{eq:base station transmitted power constrain} and the
interference power constraint \eqref{eq:the total received interference constrain}, and $\alpha > 0$ represents the regularization parameter. We refer to this
precoding as PP-RZF precoding.

Before setting each of the parameters in \eqref{eq:the RZF precoding}, two special cases of the PP-RZF precoding are considered first. On the one hand, if
$\beta=0$ then $\qG$ degrades to the conventional RZF precoding. On the other hand, if $\beta=1$ then $\check{\qH}$ is completely orthogonal to $\qF$ and we have $\qF\check{\qH}^H = \qzero$, i.e., no interference signal from the secondary BS will leak to the PUs. Therefore, the interference power constraint (\ref{eq:the total
received interference constrain}) is naturally guaranteed. Furthermore, the amount of the interference to the PUs decreases as the projection control
parameter increases.

Now we return to the setting of the normalization parameter in \eqref{eq:the RZF precoding}. Considering Case I, from \eqref{eq:base station transmitted power
constrain} and \eqref{eq:the total received interference constrain 1}, we have
\begin{subequations} \label{eq:xi_i}
\begin{align}
    \xi^2 \leq & \xi_0^2 \triangleq \frac{P_T}{ \Ex \left\{\frac{1}{N} \tr\left( \left( \check{\qH}^H \check{\qH} + \alpha \qI_N \right)^{-1} \check{\qH}^H \check{\qH} \left( \check{\qH}^H \check{\qH} + \alpha \qI_N \right)^{-1} \right) \right\} }, \\
    \xi^2 \leq & \xi_l^2 \triangleq \frac{\theta_l P_T}{ \Ex \left\{ \qf_l^H \left( \check{\qH}^H \check{\qH} + \alpha \qI_N \right)^{-1} \check{\qH}^H \check{\qH} \left( \check{\qH}^H \check{\qH} + \alpha \qI_N \right)^{-1} \qf_l \right\} }, \mbox{~~for}~ l = 1,\ldots,L.
\end{align}
\end{subequations}
To satisfy \eqref{eq:base station transmitted power constrain} and \eqref{eq:the total received interference constrain 1} simultaneously, we set $\xi^2 = \min \{\xi_0^2, \xi_l^2, l = 1,\ldots,L\}$. Then, the SINR of secondary user ${\sf SU}_k$ is given by
\begin{align}\label{eq:SINR}
   \gamma_{k} & = \frac{ \left| \qh_k^H \left( \check{\qH}^H \check{\qH} + \alpha \qI_N \right)^{-1} \check{\qh}_k \right|^2} { \qh_k^H \left( \check{\qH}^H \check{\qH} + \alpha \qI_N \right)^{-1} \check{\qH}_{[k]}^H \check{\qH}_{[k]} \left( \check{\qH}^H \check{\qH} + \alpha \qI_N \right)^{-1} \qh_k + \frac{\sigma^2}{\xi^2} } \nonumber\\
   &= \frac{ \rho \left| \qh_k^H \left( \check{\qH}^H \check{\qH} + \alpha \qI_N \right)^{-1} \check{\qh}_k \right|^2} { \rho \qh_k^H \left( \check{\qH}^H \check{\qH} + \alpha \qI_N \right)^{-1} \check{\qH}_{[k]}^H \check{\qH}_{[k]} \left( \check{\qH}^H \check{\qH} + \alpha \qI_N \right)^{-1} \qh_k + \nu},
\end{align}
where $\check{\qH}_{[k]} \triangleq [ \check{\qh}_1, \ldots, \check{\qh}_{k-1}, \check{\qh}_{k+1}, \ldots, \check{\qh}_K ]^H \in \bbC^{(K-1)\times N}$,
$\check{\qh}_k \triangleq (\qI_N-\beta\qW^H\qW)\qh_k$, $\rho \triangleq P_T/\sigma^2$, and
\begin{align}
 \nu \triangleq \frac{P_T}{\xi^2} =  \max &\bigg\{ \Ex \left\{ \frac{1}{N} \tr\left( \left( \check{\qH}^H \check{\qH} + \alpha \qI_N \right)^{-1} \check{\qH}^H \check{\qH} \left( \check{\qH}^H \check{\qH} + \alpha \qI_N \right)^{-1} \right) \right\} ,  \nonumber\\
   &~~~~\frac{1}{\theta_l} \Ex \left\{ \qf_l^H \left( \check{\qH}^H \check{\qH} + \alpha \qI_N \right)^{-1} \check{\qH}^H \check{\qH} \left( \check{\qH}^H \check{\qH} + \alpha \qI_N \right)^{-1} \qf_l \right\}, l = 1,\ldots,L \bigg\}. \label{eq:nu}
\end{align}
Here, the equality of (\ref{eq:nu}) follows from (\ref{eq:xi_i}). For Case II, we have
\begin{align}
 \nu = \max &\bigg\{ \Ex \left\{ \frac{1}{N} \tr\left( \left( \check{\qH}^H \check{\qH} + \alpha \qI_N \right)^{-1} \check{\qH}^H \check{\qH} \left( \check{\qH}^H \check{\qH} + \alpha \qI_N \right)^{-1} \right) \right\} ,  \nonumber\\
   &~~~~\frac{1}{\theta_{\rm all}} \Ex \left\{ \tr\left( \qF \left( \check{\qH}^H \check{\qH} + \alpha \qI_N \right)^{-1} \check{\qH}^H \check{\qH} \left( \check{\qH}^H \check{\qH} + \alpha \qI_N \right)^{-1} \qF^H \right) \right\} \bigg\}. \label{eq:nu2}
\end{align}
Consequently, under the assumption of perfect CSI at both transmitter and receivers, the ergodic sum-rate of the CR network with Gaussian signaling can be defined as
\begin{align}\label{eq:the ergodic sum-rate}
   R_{\rm{sum}} \triangleq \sum_{k=1}^K \Ex \left\{\log \left( 1 + \gamma_{k} \right)\right\}.
\end{align}
Note that $\gamma_k$ in the ergodic sum-rate is subject to the BS transmit power constraint in (3) and the interference power constraint (to the primary users) in (4).

\subsection{Problem Formulation}

The SINR $\gamma_{k}$ in \eqref{eq:SINR} is a function of the regularization parameter $\alpha$ and the projection control parameter $\beta$. In the
literature, adopting incorrect regularization parameter would degrade performance significantly \cite{Peel-05Tcom,Wagner-12IT,ZhangJun-13TWC}. In light of the discussion in the previous subsection, one can realize that a proper projection control parameter can assist in decreasing the interference to the
PUs. As a result, using the PP-RZF precoding effectively requires obtaining appropriate values of $\alpha$ and $\beta$ to optimize certain
performance metrics. In this paper, we are interested in finding $(\alpha,\beta)$, which maximizes the ergodic sum-rate (\ref{eq:the ergodic sum-rate}).
Formally, we have
\begin{align}\label{eq:optimal ergodic sum-rate}
   \left\{\alpha^{\rm opt}, \beta^{\rm opt} \right\} = \argmax_{\alpha > 0, 1 \geq \beta \geq 0} & ~R_{\rm{sum}}.
\end{align}

The above problem does not admit a simple closed-form solution and the solution must be computed via a two-dimensional line search. Monte-Carlo
averaging over the channels is required to evaluate the ergodic sum-rate \eqref{eq:the ergodic sum-rate} for each choice of $\alpha$ and $\beta$, which,
unfortunately, makes the overall computational complexity prohibitive. This drawback hinders the development of the PP-RZF precoding. To address this problem,
we resort to an asymptotic expression of \eqref{eq:the ergodic sum-rate} in the large-system regime in the next section.

\section{Performance Analysis of Large Systems}

This section presents the main results of the paper. First, we derive deterministic equivalents for the SINR $\gamma_{k}$ and the ergodic sum-rate
$R_{\rm{sum}}$ in a large-system regime. Then, we identify the asymptotically optimal regularization parameter and the asymptotically optimal projection control parameter
to achieve the optimal deterministic equivalent for the ergodic sum-rate.

\subsection{Deterministic Equivalents for the SINR and the Ergodic Sum-Rate}

We present a deterministic equivalent for the SINR $\gamma_{k}$ by considering the
large-system regime, where $N$, $K$, and $L$ approach infinity, whereas
\begin{equation*}
c_1 = \frac{N}{K} \mbox{~~and~~} c_2 = \frac{L}{N}
\end{equation*}
are fixed ratios, such that $0 < \lim \inf_N c_1 \leq \lim
\sup_N c_1 < \infty, 0 < \lim \inf_N c_2 \leq \lim \sup_N c_2 \leq 1$. For brevity, we simply use $\calN \rightarrow \infty$ to represent the quantity in such
limit. In addition, we impose the assumptions below in our derivations.
\begin{assumption} \label{Assum: 1}
For the channel matrices $\qH$ and $\qG$ in \eqref{eq:def_tilde_HandF}, we have the following hypotheses:
\begin{itemize}
\item[1)] $\tqH = [\frac{1}{\sqrt{N}} \tilde{h}_{ij}] \in \bbC^{K \times N}$, where $\tilde{h}_{ij}$'s
are i.i.d. standard Gaussian.

\item[2)] $\tqF = [\frac{1}{\sqrt{N}} \tilde{f}_{ij}] \in \bbC^{L \times N}$, where $\tilde{f}_{ij}$'s have the same statistical
properties as $\tilde{h}_{ij}$'s.

\item[3)] $\qR_1 = \diag(r_{1,1},\ldots, r_{1,K}) \in \bbC^{K \times K}$ and $\qR_2 = \diag(r_{2,1},\ldots, r_{2,L}) \in \bbC^{L \times L}$ are diagonal matrices with uniformly bounded spectral norm\footnote{\cite{Horn-90BOOK}: The spectral norm $|\!|\!|\bullet|\!|\!|_2$ is defined on $\bbC^{n\times n}$ by $|\!|\!|\qA|\!|\!|_2 \equiv \max \{\sqrt{\lambda}: \lambda \mbox{~~is an eigenvalue of~~} \qA^*\qA\}$.} with respect to $K$ and $L$, respectively.
\end{itemize}
\end{assumption}

Based on the definition of $\qW$ in \eqref{eq:checkH}, $\qW^H\qW = \qF^H (\qF\qF^H)^{-1}\qF = \tqF^H (\tqF\tqF^H)^{-1}\tqF$ $= \tqW^H\tqW$, where
$\tqW \triangleq (\tqF\tqF^H)^{-\frac{1}{2}}\tqF$. Therefore, $\tqW$ is $L \leq N$ rows of an $N \times N$ Haar-distributed unitary random matrix \cite[Definition 4.6]{Couillet-11BOOK}. The partially-projected channel matrix $\check{\qH}$ is clearly composed of the product of two different types of independent
random matrices. Owing to recent advancements in large dimensional RMT \cite{Couillet-11BOOK}, we arrive at the following theorem, and the
details are given in Appendix A.

\begin{theorem}\label{Th: 2}
Under Assumption \ref{Assum: 1}, in Case I (per PU power constraint), as $\calN \rightarrow \infty$, we have $\gamma_k - \overline{\gamma}_k  \xrightarrow{a.s.} 0$, for $k = 1,\dots,K$,
where
\begin{align}
 \overline{\gamma}_{k} &= \frac{ \rho \overline{a}_k^2} { \rho \overline{b}_k + \overline{\nu}},  \label{eq:gamma1 deterministic equivalent}
\end{align}
with
\begin{subequations} \label{eq:u_all}
\begin{align}
 \overline{a}_k&= \frac{r_{1,k} (t_1 + t_2(1-\beta)) }{\alpha+r_{1,k}(t_1 + t_2(1-\beta)^2)}, \label{eq:Dea}  \\
 \overline{b}_k &= r_{1,k} \left(  \frac{\left(1-\overline{a}_k\right)^2 t_1}{1+e} +  \frac{\left(1-(1-\beta)\overline{a}_k\right)^2 (1-\beta)^2 t_2}{1+(1-\beta)^2 e} \right)\frac{\partial e}{\partial \alpha} , \label{eq:Deb}\\
 \overline{\nu} & = \max \bigg\{ \left(\frac{t_1}{1+e}+\frac{(1-\beta)^2 t_2}{1+(1-\beta)^2 e}\right)\frac{\partial e}{\partial \alpha}, \frac{r_{2,l}}{\theta_l c_2} \frac{(1-\beta)^2 t_2}{1+(1-\beta)^2 e} \frac{\partial e}{\partial \alpha}, l = 1,\ldots,L  \bigg\},  \label{eq:Dev} \\
 \frac{\partial e}{\partial \alpha} &= \frac{\frac{1}{N} \tr \qR_1 \left(\alpha \qI_K + \left(t_1 + t_2 (1-\beta)^2\right) \qR_1 \right)^{-2}}{1 - \left(\frac{t_1}{1+e}+\frac{(1-\beta)^4 t_2}{1+(1-\beta)^2 e}\right) \frac{1}{N} \tr \left( \qR_1 \left(\alpha \qI_K + \left(t_1 + t_2 (1-\beta)^2\right) \qR_1 \right)^{-1}\right)^2}, \label{eq:partial e} \\
 t_1 &=  \frac{1-c_2}{1+e}, 
  ~~~~~~~~t_2 =  \frac{c_2}{1+(1-\beta)^2 e}, \label{eq:t1_f}
\end{align}
\end{subequations}
and $e$ is given as the unique solution to the fixed point equation
\begin{align}
  e =& \frac{1}{N} \tr \qR_1 \left(\alpha \qI_K + \left(t_1 + t_2 (1-\beta)^2\right) \qR_1 \right)^{-1}. \label{eq:e1}
\end{align}
Meanwhile, in Case II (sum power constraint), all asymptotic expressions remain, except for $\overline{\nu}$, which should be changed to
\begin{align}
 \overline{\nu} & = \max \bigg\{ \left(\frac{t_1}{1+e}+\frac{(1-\beta)^2 t_2}{1+(1-\beta)^2 e}\right)\frac{\partial e}{\partial \alpha}, \frac{\tr \qR_2}{\theta_{\rm all} c_2} \frac{(1-\beta)^2 t_2}{1+(1-\beta)^2 e} \frac{\partial e}{\partial \alpha} \bigg\}. \label{eq:Dev2}
\end{align}
\hfill $\blacksquare$
\end{theorem}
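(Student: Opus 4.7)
The plan is to exploit the fact that $\qD \triangleq \qI_N - \beta\qW^H\qW$ has only two distinct eigenvalues to reduce the analysis of $\check{\qH}^H \check{\qH}$ to a column-scaled Wishart-type problem, and then apply the standard deterministic-equivalent machinery (Sherman--Morrison, trace lemma, rank-one perturbation bounds) for this simpler structure.

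First, since $\qW^H\qW$ is an orthogonal projector of rank $L$, its spectral decomposition gives $\qD = \qU\qS\qU^H$ with $\qS = \diag((1-\beta)\qI_L, \qI_{N-L})$ and $\qU$ an $N\times N$ unitary matrix whose columns span, respectively, the row space of $\qF$ and its orthogonal complement. Because $\tqH$ is independent of $\qF$ and has i.i.d.\ Gaussian entries, $\tqH\qU$ has the same conditional distribution as $\tqH$, and all trace-type functionals of the resolvent $(\check{\qH}^H \check{\qH}+\alpha\qI_N)^{-1}$ are invariant under the unitary conjugation by $\qU$. Consequently, I would analyse the equivalent matrix $\qB \triangleq \qS\tqH^H\qR_1\tqH\qS$ in place of $\check{\qH}^H \check{\qH}$; the vectors $\qh_k$ and $\qf_l$ that appear explicitly in the SINR and the constraints carry over after the same rotation, with $\qU^H\qf_l$ supported precisely on the $L$-dimensional scaled block of $\qS$.

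Second, with this reduction I would derive a deterministic equivalent for $(\qB + \alpha\qI_N)^{-1}$ column by column. Splitting the columns of $\tqH\qS$ according to the two blocks of $\qS$, applying Sherman--Morrison to isolate one column at a time, and invoking the trace lemma $\frac{1}{N}\tqh_k^H\qA\tqh_k - \frac{1}{N}\tr\qA \xrightarrow{a.s.} 0$ together with standard rank-one perturbation bounds yields a fixed-point system. Because $\qS$ takes only two values, the effective column-variance coefficient collapses to a weighted sum: the $L$ scaled columns contribute $c_2/(1+(1-\beta)^2 e)=t_2$ and the $N-L$ unscaled columns contribute $(1-c_2)/(1+e)=t_1$, which leads directly to the scalar fixed-point equation (\ref{eq:e1}). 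The derivative expression (\ref{eq:partial e}) then follows by implicit differentiation of (\ref{eq:e1}) with respect to $\alpha$.

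Third, I would obtain each piece of $\overline{\gamma}_k$ separately. For the numerator, a rank-one perturbation argument decouples $\qh_k$ from the resolvent that contains it, and the trace lemma evaluates the resulting quadratic form to give $\overline{a}_k$ in (\ref{eq:Dea}). For the interference term in the denominator, the standard trick of differentiating the first-order equivalent with respect to $\alpha$ produces the $\partial e/\partial\alpha$ factor and, combined with the two-block splitting, yields (\ref{eq:Deb}). The quantities inside $\overline{\nu}$ are deterministic equivalents of $\frac{1}{N}\tr(\cdot)$, $\qf_l^H(\cdot)\qf_l$, and $\tr\qF(\cdot)\qF^H$ applied to the quadratic kernel $(\check{\qH}^H \check{\qH}+\alpha\qI_N)^{-1}\check{\qH}^H \check{\qH}(\check{\qH}^H \check{\qH}+\alpha\qI_N)^{-1}$; here $\qf_l$ probes only the $L$-dimensional scaled block of $\qS$, producing the extra $(1-\beta)^2 t_2/(1+(1-\beta)^2 e)$ factor, and taking the max over the resulting constraints distinguishes Case~I from Case~II in (\ref{eq:Dev}) and (\ref{eq:Dev2}). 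The hard part will be the bookkeeping induced by the two-block structure of $\qS$: every quadratic form and every trace must be split carefully between the null-space and row-space portions of $\qF$, and one must justify that the asymptotic contributions from $\qU^H\qh_k$ and $\qU^H\qf_l$ are captured correctly. Establishing uniqueness and positivity of the solution $e$ to (\ref{eq:e1}) in the presence of both the $t_1$ and $t_2$ contributions will also require a Stieltjes-transform stability argument, after which the rest of the proof follows the template used for conventional RZF deterministic equivalents in \cite{Wagner-12IT}.
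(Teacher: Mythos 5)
Your proposal is correct and reaches the right fixed-point system, but it takes a genuinely different route to the core random-matrix result than the paper does. The paper never diagonalizes $\qI_N-\beta\qW^H\qW$: it keeps $\tqW=(\tqF\tqF^H)^{-\frac{1}{2}}\tqF$ as $L$ rows of a Haar unitary and proves the auxiliary Theorem~\ref{Th: 1} by an \emph{iterative} deterministic equivalent in the sense of Hoydis et al. --- first conditioning on $\tqW$ and applying the separable-correlation result (Lemma~\ref{Lemma:TXXT}) with the \emph{random} matrix $\qT=(\qI_N-\beta\tqW^H\tqW)^2$, then eliminating $\tqW$ via the idempotence $(\tqW^H\tqW)^2=\tqW^H\tqW$, the trace lemma for isometric matrices, and Lemma~\ref{Lemma:WW}. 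You instead write $\qI_N-\beta\qW^H\qW=\qU\,\diag((1-\beta)\qI_L,\qI_{N-L})\,\qU^H$ and use rotational invariance of the Gaussian $\tqH$ to replace the random projector by a \emph{deterministic} two-valued correlation matrix, after which Lemma~\ref{Lemma:TXXT} applies in one shot and $\te=\frac{1}{N}\tr\,\qS^2(\qI_N+e\qS^2)^{-1}=t_1+(1-\beta)^2t_2$ reproduces \eqref{eq:e1} directly. Your route is shorter and dispenses with the Haar-matrix machinery entirely, at the price of leaning on Gaussianity (which Assumption~\ref{Assum: 1} grants); the paper's route would survive non-Gaussian entries with finite fourth moment. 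The subsequent treatment of the signal, interference, and noise terms (rank-one perturbations, the $\partial/\partial\alpha$ trick, and $\qf_l$ probing only the scaled block) coincides with the paper's Appendix~B. Two points you should not gloss over in a full write-up: (i) $\nu$ in \eqref{eq:nu} is defined through expectations, so after the almost-sure limits of the random quadratic forms you must still argue that the expectations converge to the same limits, as the paper does in \eqref{eq:m_Exm} via a martingale argument; and (ii) the scaled block contributes $(1-\beta)^2t_2$ to $\te$ in \eqref{eq:e1} but $t_2$ to the resolvent trace $\frac{1}{N}\tr(\check{\qH}^H\check{\qH}+\alpha\qI_N)^{-1}\asymp(t_1+t_2)/\alpha$ and $(1-\beta)t_2$ to the cross term $\qh_k^H(\cdot)^{-1}\check{\qh}_k$ --- keeping these three roles distinct is exactly the bookkeeping that produces the factors of $(1-\beta)$ in \eqref{eq:Dea} and \eqref{eq:Deb}.
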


An intuitive application of Theorem \ref{Th: 2} is that $\gamma_{k}$ can be approximated by its deterministic equivalent\footnote{\cite[Definition 6.1]{Couillet-11BOOK} (also see \cite{Hachem-07AAP}): Consider a series of Hermitian random matrices $\qB_1, \,\qB_2, \, \ldots,$ with $\qB_N \in \bbC^{N \times N}$ and a series $f_1,f_2,\ldots$ of functionals of $1\times 1, 2\times 2, \ldots$ matrices. A deterministic equivalent of $\qB_N$ for functional $f_N$ is a series $\qB_1^{\circ}, \, \qB_2^{\circ}, \, \ldots$, where $\qB_N^{\circ}\in \bbC^{N\times N}$, of deterministic matrices, such that $\lim_{N\rightarrow\infty} f_N (\qB_N)-f_N(\qB_N^{\circ}) \rightarrow 0$. In this case, the convergence often be with probability one. Similarly, we term $g_N \triangleq f_N(\qB_N^{\circ})$ the deterministic equivalent of $f_N (\qB_N)$, that is, the deterministic series $g_1, g_2, \ldots$, such that $f_N (\qB_N) - g_N \rightarrow 0$ in some sense. \\
Note that the deterministic equivalent of the Hermitian random matrix $\qB_N$ is a \emph{deterministic} and a \emph{finite dimensional} matrix $\qB_N^{\circ}$. In addition, the deterministic equivalent of $f_N (\qB_N)$ is $g_N \triangleq f_N(\qB_N^{\circ})$, which is a function of $\qB_N^{\circ}$.} $\overline{\gamma}_{k}$, which can be determined based only on statistical channel knowledge, that is, $\qR_1$, $\qR_2$, and $\sigma^2$. Note that, according to the definition of the deterministic equivalent (see footnote 3), in the expression of the deterministic equivalent $\overline{\gamma}_{k}$, the parameters $N$, $K$, $L$, as well as the matrix dimensions of $\qR_1$ and $\qR_2$, are \emph{finite}.
Combining Theorem 1 with the continuous mapping theorem\footnote{\cite[Theorem 25.7-Corollary 2]{Billingsley-11BOOK}: If $x_n \xrightarrow{a.s.} a$ and $h$ is continuous function at $a$, then $h(x_n) \xrightarrow{a.s.} h(a)$.}, we have $\log \left( 1 + \gamma_{k} \right) - \log \left( 1 + \overline{\gamma}_{k} \right)\xrightarrow{a.s.} 0 $. An approximation $\overline{R}_{\rm{sum}}$ of the ergodic sum-rate $R_{\rm{sum}}$ in \eqref{eq:the ergodic sum-rate} is obtained by replacing the instantaneous SINR $\gamma_{k}$ with its large system approximation $\overline{\gamma}_{k}$, that is,
\begin{equation} \label{eq:deterministic equivalent sum-rate}
 \overline{R}_{\rm{sum}} =\sum_{k=1}^K \log \left( 1 + \overline{\gamma}_{k} \right).
\end{equation}
Therefore, when $\calN \rightarrow\infty, \frac{1}{K} \left( R_{\rm{sum}} - \overline{R}_{\rm{sum}} \right) \xrightarrow{a.s.} 0$
holds true almost surely.

To facilitate our understanding of Theorem \ref{Th: 2}, we look at it from the two special cases as follows:
\begin{enumerate}
\item In Theorem \ref{Th: 2}, we introduce the two variables $t_1$ and $t_2$ to reflect the effects of the projection control parameter $\beta$.
If $\beta = 1$, from \eqref{eq:u_all}, then the deterministic equivalent $\overline{\gamma}_{k}$ does not depend on $t_2$. Substituting $\beta = 1$ into
\eqref{eq:gamma1 deterministic equivalent} and letting $\qR_1 = \qI_K$, we have
\begin{align} \label{eq:gammabeta1}
 \overline{\gamma}_{k} &= \frac{\rho\Big(c_1(1-c_2)(1+\zeta(\mu,\eta,\alpha))^2-\zeta(\mu,\eta,\alpha)^2 \Big)}{\rho + \Big(1 +\zeta(\mu,\eta,\alpha)\Big)^2},
\end{align}
where $\zeta(\mu,\eta,\alpha) \triangleq t_1/\alpha$, $\mu \triangleq 1-c_2$, and $\eta \triangleq 1/c_1$. Combining \eqref{eq:t1_f} and \eqref{eq:e1}, we
obtain
\begin{equation} \label{eq:t1_SpecCase}
    \zeta(\mu,\eta,\alpha) \triangleq \frac{t_1}{\alpha}
    = \frac{1}{2} \left( \frac{\mu-\eta}{\alpha} - 1 + \sqrt{\frac{(\mu-\eta)^2}{\alpha^2} + \frac{2(\mu+\eta)}{\alpha} +1 } \, \right).
\end{equation}
Before providing an observation based on the above, we briefly review a well-known result from the large dimensional RMT. First, we consider the
definition of $\qH$ from (\ref{eq:def_tilde_HandF}). If $\qR_1 = \qI_K$, the entries of the $K \times N$ matrix $\qH$ are zero mean i.i.d. with variance $1/N$.
Following \cite[Chapter 3]{Couillet-11BOOK}, we see that as $N, \, K \rightarrow \infty$ with $N/K \rightarrow c_1$, $\qh_k^H \left( \qH^H \qH + \alpha \qI_N
\right)^{-1} \qh_k$ converges almost surely to
\begin{equation} \label{eq:t0_def}
   \int_{a}^{b} \frac{1}{\lambda+\alpha} f(\lambda) \,d\lambda,
\end{equation}
where
\begin{equation}
  f(\lambda) = \left(1- \eta \right)^{+}\delta(\lambda) + \frac{\sqrt{(\lambda-a)^{+}(b-\lambda)^{+}}}{2\pi \lambda}
\end{equation}
with $(x)^{+} \triangleq \max\{x,\,0\}$, $a \triangleq (1- \sqrt{\eta})^2$, and $b \triangleq (1+ \sqrt{\eta})^2$. In fact, $f(u)$ is the limiting empirical
distribution of the eigenvalues of $\qH^H \qH$ and is known as the Mar\v{c}cenko-Pastur law \cite{Mar-67}. The integral of (\ref{eq:t0_def}) can be evaluated
in closed form
\begin{equation} \label{eq:t0}
   \frac{1}{2} \left( \frac{1-\eta}{\alpha} - 1 + \sqrt{\frac{(1-\eta)^2}{\alpha^2} + \frac{2(1+\eta)}{\alpha} +1 } \, \right).
\end{equation}
Note that (\ref{eq:t1_SpecCase}) is equal to (\ref{eq:t0}) when $\mu$ is replaced with $1$, i.e., \eqref{eq:t0} is equal to $\zeta(1,\eta,\alpha)$.
In fact, following the similar derivations of Theorem \ref{Th: 2}, we can show that \eqref{eq:gammabeta1} and the SINR of the conventional RZF precoding share
the same formulation by replacing $\zeta(\mu,\eta,\alpha)$ in \eqref{eq:gammabeta1} with $\zeta(1,\eta,\alpha)$. Substituting the definitions of $c_1,
c_2$ into $\mu$ and $\eta$, we have $\mu-\eta = 1-c_2-1/c_1 = (N-(L+K))/N$. Comparing this value with $1-\eta = (N-K)/N$ in \eqref{eq:t0}, we thus
conclude that if $\beta =1$, the SINR of the PP-RZF precoding is \emph{similar}\footnote{Notably, when $\beta =1$, the SINRs of the PP-RZF precoding and the conventional RZF precoding are similar but \emph{not} identical because $\zeta(\mu,\eta,\alpha)$ is replaced with $\zeta(1,\eta,\alpha)$.} to that of the conventional RZF precoding but with an increase in the number of active users from $K$ to $K+L$. Hence, the degrees of freedom of the PP-RZF precoding is reduced to $N-(K+L)$ because the additional $L$ degrees of freedom are used to suppress interference to the PUs.

\item For another extreme case with $\beta = 0$ in Theorem \ref{Th: 2}, $t_1+t_2=\frac{1}{1+e}$. Letting $\qR_1 = \qI_K$,
we obtain $\frac{1}{\alpha(1+e)} = \zeta(1,\eta,\alpha) $, such that
\begin{align} \label{eq:gammabeta0}
 \overline{\gamma}_{k} &= \frac{\rho\Big(c_1(1+\zeta(1,\eta,\alpha))^2-\zeta(1,\eta,\alpha)^2 \Big)}{\rho + \nu_0\Big(1+\zeta(1,\eta,\alpha)\Big)^2},
\end{align}
where $ \nu_0 = \max \{1, r_{2,l}/\theta_l, l=1,\ldots,L\}$ is for Case I and $ \nu_0 = \max \{1, \tr\qR_2/\theta_{\rm all}\}$ is for Case II. The received
interference power constraint at the PUs (\ref{eq:the total received interference constrain}) can be controlled only through $\nu_0$, where $\beta$ is
not involved in $\nu_0$. Therefore, the SINR $\overline{\gamma}_{k}$ is significantly degraded if the channel path gains between the BS and the PUs (that is,
$r_{2,l}$'s) are strong. However, if the channel path gains between the BS and the PUs are weak, then $\nu_0 = 1$ and $\overline{\gamma}_{k}$ behave in a manner similar to but \emph{not} identical to that of the conventional RZF precoding because $c_1$ is replaced with $c_1(1-c_2)$.

Comparing (\ref{eq:gammabeta0}) for $\beta = 0$ with (\ref{eq:gammabeta1}) for $\beta = 1$ obtains notable results. First, we note that
(\ref{eq:gammabeta1}) and (\ref{eq:gammabeta0}) share a similar formulation, except the additional $\nu_0$ appears at the denominator of
(\ref{eq:gammabeta0}). When $\beta = 1$, the secondary BS yields zero interference on the PUs, such that the interference power constraint in
(\ref{eq:the total received interference constrain}) is always inactive. Therefore, no additional parameter $\nu_0$ is required to reflect the received
interference power constraint at the PUs. Although $\nu_0 \geq 1$, the SINR performance of the PP-RZF precoding with $\beta = 1$ is not implied to
be always better than that with $\beta = 0$. An additional note should be given on $\zeta(\cdot,\eta,\alpha) $, where the argument $\cdot$ is $\mu$ for
$\beta = 1$ and $1$ for $\beta = 0$. The parameter $\mu = (N-L)/N$ for $\beta = 1$ implies that the additional $L$ degrees of freedom is used to suppress
interference to the PUs. Consequently, if the channel path gains between the BS and the PUs are weak, the SINR performance of the PP-RZF precoding with
$\beta = 1$ shall not be better than that with $\beta = 0$. Thus, we infer that the projection control parameter should be decreased if the received
interference power constraint at the PUs is relaxed.

Finally, we note that $\zeta(1,\eta,\alpha)$ agrees with $z(r,\alpha_0)$ in \cite[Theorem 1]{HeYY-13TWC,HeYY-13ICASSP}. As a result,
(\ref{eq:gammabeta0}) is identical to the deterministic equivalent for the SINR obtained in \cite[Theorem 1]{HeYY-13TWC,HeYY-13ICASSP}, where the PP-RZF precoding
with a \emph{single} PU is considered. The deterministic equivalent for the SINR in \cite[Theorem 1]{HeYY-13TWC,HeYY-13ICASSP} is clearly a special case
of \eqref{eq:gamma1 deterministic equivalent} with $\beta = 0$ even though the case of $\beta \neq 0$ is considered in \cite[Theorem 1]{HeYY-13TWC,HeYY-13ICASSP} because a single PU results only in one-dimensional perturbation, and the effect of such perturbation \emph{vanishes} in a large system. Even if the number of PUs $L$ is finite and only $N$ becomes large, the effect of $\beta$ vanishes. The lack of a relation between $\beta$ and the SINRs will result in a bias when the number of antennas at the BS is not so large. However, our analytical results show the effect of $\beta$ by assuming that $N$, $K$, and $L$ are large, whereas  $c_1 = N/K$ and $c_2 = L/N$ are fixed ratios. Thus, our results are clearly more general than those in \cite{HeYY-13TWC,HeYY-13ICASSP}.
\end{enumerate}

\begin{corollary}\label{Co: Th1}
In addition to the assumptions of Theorem \ref{Th: 2}, we suppose further that $c_2 = 1$ (that is, $N = L$), $\qR_1 = r_1 \qI_K$, and $\beta \in [0,1)$.
Then, as $\calN \rightarrow \infty$, we have $\gamma_{k} - \overline{\gamma} \xrightarrow{a.s.} 0$ for $k = 1,\dots,K$, where
\begin{align}
 \overline{\gamma} &= \frac{ \rho \left( c_1 r_1^2-(c_1 \alpha e - r_1)^2 \right)} { \rho (c_1 \alpha e)^2 + \nu_0},  \label{eq:Degamma1}
\end{align}
and $e$ is given as an unique solution to the fixed point equation
\begin{align}
 e = \frac{r_1(1+e(1-\beta)^2)}{c_1 \alpha (1+e(1-\beta)^2) + c_1 r_1 (1-\beta)^2}, \nonumber
\end{align}
and $ \nu_0 = r_1 \max \{1, r_{2,l}/\theta_l, l=1,\ldots,L\}$ for Case I or $ \nu_0 = r_1 \max \{1, \tr\qR_2/\theta_{\rm all}\}$ for Case II.
\end{corollary}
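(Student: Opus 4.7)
The plan is to deduce Corollary~\ref{Co: Th1} by direct specialization of Theorem~\ref{Th: 2}. The key structural simplification is already visible at the level of the random matrices: when $c_2 = 1$, i.e., $L = N$, the matrix $\qF$ is almost surely invertible, so $\qW = (\qF\qF^H)^{-\frac{1}{2}}\qF$ is a full $N\times N$ Haar-distributed unitary. Thus $\qW^H\qW = \qI_N$ and $\check{\qH} = \qH(\qI_N - \beta\qI_N) = (1-\beta)\qH$, i.e., the PP-RZF precoding collapses to a rescaled conventional RZF with regularization $\alpha/(1-\beta)^2$. This explains why a single scalar fixed-point equation suffices in the corollary and why $\beta = 1$ must be excluded (the precoder would degenerate to zero).

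Concretely, I would substitute $c_2 = 1$ and $\qR_1 = r_1\qI_K$ into the expressions of Theorem~\ref{Th: 2}. Immediately $t_1 = 0$ and $t_2 = 1/(1+(1-\beta)^2 e)$, while the trace in \eqref{eq:e1} collapses to $e = r_1/[c_1(\alpha + t_2(1-\beta)^2 r_1)]$; substituting $t_2$ and clearing denominators yields the corollary's fixed-point equation. Existence and uniqueness of the positive solution carry over from the monotonicity argument used for Theorem~\ref{Th: 2}. The user-independence $\bar{a}_k \equiv \bar{a}$, $\bar{b}_k \equiv \bar{b}$ is immediate from $r_{1,k} \equiv r_1$. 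Using the fixed-point relation in the equivalent form $\alpha + t_2(1-\beta)^2 r_1 = r_1/(c_1 e)$, one finds $\bar{a} = c_1 e(1-\beta)/(1+(1-\beta)^2 e)$, and together with the fixed-point equation this yields the pivotal identities
\[
  1 - (1-\beta)\bar{a} = \frac{c_1\alpha e}{r_1}, \qquad c_1\alpha e - r_1 = -r_1(1-\beta)\bar{a}.
\]

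These identities make the compact form \eqref{eq:Degamma1} tractable. A direct computation of $\partial e/\partial\alpha$ under the same assumptions produces a single scalar
\[
  C = \frac{c_1 e^2 (1-\beta)^2}{r_1^2\bigl[(1+(1-\beta)^2 e)^2 - c_1 e^2 (1-\beta)^4\bigr]},
\]
such that $\bar{a}^2 = C\,[c_1 r_1^2 - (c_1\alpha e - r_1)^2]$, $\bar{b} = C\,(c_1\alpha e)^2$, and each of the $L+1$ arguments of the max defining $\bar{\nu}$ in \eqref{eq:Dev} equals $C$ times $r_1$ or $C$ times $r_1 r_{2,l}/\theta_l$, respectively, so $\bar{\nu} = C \nu_0$ with $\nu_0 = r_1 \max\{1, r_{2,l}/\theta_l\}$; Case~II is handled analogously via \eqref{eq:Dev2}. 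Dividing numerator and denominator of $\bar{\gamma}_k$ in \eqref{eq:gamma1 deterministic equivalent} by $C$ then produces \eqref{eq:Degamma1} and proves the claim. The conceptual content is modest; the only real difficulty is the algebraic bookkeeping to identify and cancel the common factor $C$ across the SINR numerator, interference denominator, and each argument of the max in $\bar{\nu}$, which the two displayed identities make routine.
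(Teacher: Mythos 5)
Your proposal is correct and follows the same route as the paper, which simply specializes Theorem~\ref{Th: 2} (via Theorem~\ref{Th: 1}) to $c_2=1$ and $\qR_1=r_1\qI_K$; you have merely made explicit the algebra (the identities $1-(1-\beta)\overline{a}=c_1\alpha e/r_1$ and the common factor $C$) that the paper leaves to the reader. All of your intermediate identities check out against \eqref{eq:Dea}--\eqref{eq:e1}.
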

\begin{proof}
By letting $c_2 = 1$ and $\qR_1 = r_1 \qI_K$, we immediately obtain the result from Theorems \ref{Th: 2} and \ref{Th: 1}.
\end{proof}

For a brief illustration, we consider only Case II of Corollary \ref{Co: Th1} because the same characteristics can be found in Case I. Given that
$\theta_{\rm all} = P_{\rm all}/P_T = P_{\rm all}/(\sigma^2 \rho)$, \eqref{eq:Degamma1} can be rewritten as
\begin{equation}\label{eq:gammaRemark}
 \overline{\gamma} = \left\{
\begin{aligned}
&\frac{   c_1 r_1^2-(c_1 \alpha e - r_1)^2 } { (c_1 \alpha e)^2 + 1/\rho}, & & 0<\frac{\rho \sigma^2\tr\qR_2}{P_{\rm all}} \leq 1; \\
&\frac{   c_1 r_1^2-(c_1 \alpha e - r_1)^2 } { (c_1 \alpha e)^2 + \sigma^2\tr\qR_2/P_{\rm all} }, & & 1 < \frac{\rho \sigma^2\tr\qR_2}{P_{\rm all}} .
\end{aligned}
\right.
\end{equation}
We can see that $\overline{\gamma}$ does not depend on the SNR $\rho$ when $1 < \rho \sigma^2\tr\qR_2/ P_{\rm all}$. In this case, the system performance is
interference-limited. Notably, the assumptions of $c_2 = 1$ and $\beta \neq 1$ are taken in Corollary \ref{Co: Th1}. In the case of $c_2 = 1$ and
$\beta = 1$, from \eqref{eq:Dea}, we have $\overline{a}_k = 0$ and consequently $\overline{\gamma}_{k}=0$, which implies a failure in the transmission. This result
is reasonable because when $c_2 = 1$, the dimension of the null space of $\qF$ is zero with probability one.\footnote{ If $N = L$, we have ${\sf
Rank}(\qI-\qF^H(\qF\qF^H)^{-1}\qF) = 0$ with probability one because from \cite[Theorem 1.1]{Rudelson-09Arxiv}, $\qF$ is a full rank square matrix with probability
one. } Therefore, the setting of $\beta = 1$ results in transmission failure, \emph{even} when the channel path gains between the BS and the PUs are weak. We thus show that a choice of appropriate $\beta$ significantly affects the successful operation of the CR network, which serves as
motivation for the remainder of this paper.

\subsection{Asymptotically Optimal Parameters}

Our numerical results confirm the high accuracy of the deterministic equivalent for the ergodic sum-rate $\overline{R}_{\rm{sum}}$ in the next section. Therefore, the
deterministic equivalent for the ergodic sum-rate can be used to determine the regularization parameter $\alpha$ and the projection control parameter $\beta$. By replacing
$R_{\rm{sum}}$ with $\overline{R}_{\rm{sum}}$ in \eqref{eq:optimal ergodic sum-rate}, we focus on this particular optimization to maximize the deterministic
equivalent for the ergodic sum-rate
\begin{align}\label{eq:optimal de sum-rate}
   \left\{\overline{\alpha}^{\rm opt}, \overline{\beta}^{\rm opt} \right\} = \argmax_{\alpha > 0, 1 \geq \beta \geq 0} & \overline{R}_{\rm{sum}}.
\end{align}
Similar to the problem in (\ref{eq:optimal ergodic sum-rate}), the asymptotically optimal solutions $\overline{\alpha}^{\rm opt}$ and $\overline{\beta}^{\rm
opt}$ do not permit closed-form solutions. However, the asymptotically optimal solution can be computed efficiently via the following methods without the need for Monte-Carlo averaging because $\overline{\gamma}_{k}$ is deterministic. First, given that $\beta$ is fixed, the optimal $\overline{\alpha}^{\rm opt}(\beta) := \argmax_{\alpha > 0}  \overline{R}_{\rm{sum}}(\beta)$ can be obtained efficiently via one-dimensional line search \cite{Wagner-12IT,ZhangJun-13TWC}, which performs the simple gradient method. The complexity in this part is linear. Then, we obtain the optimal $\overline{\beta}^{\rm opt} := \argmax_{0 \leq \beta \leq 1} \overline{R}_{\rm{sum}}(\overline{\alpha}^{\rm opt}(\beta),\beta)$ through the one-dimensional exhaustive search\footnote{Although the one-dimensional exhaustive search seems burdensome, the case in question here is easy because the search is only over a closed set $0 \leq \beta \leq 1$.}. Finally, the optimal parameters are given by $\{\overline{\alpha}^{\rm opt}(\overline{\beta}^{\rm opt}), \overline{\beta}^{\rm opt} \}$. For a special case, we obtain a condition of the optimal solutions in the following proposition:
\begin{proposition}\label{proposition: opt alphabeta}
Under the assumptions of Corollary \ref{Co: Th1}, the asymptotically optimal parameters $\overline{\alpha}^{\rm opt}$ and $\overline{\beta}^{\rm opt}$ satisfy the equation
\begin{align}
  \overline{\alpha}^{\rm opt} = \frac{\nu_0 \left(1-\overline{\beta}^{\rm opt}\right)^2}{\rho c_1 r_1}. \label{eq:optimal alphabeta}
\end{align}
where $\overline{\beta}^{\rm opt} \in [0,1)$.
\end{proposition}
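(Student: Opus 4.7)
The plan is to reduce the two-parameter maximization to a one-dimensional optimization via a clever change of variables, carry out the first-order analysis, and then read off the stated relation from the optimality condition.

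First I would observe that under the assumptions of Corollary~\ref{Co: Th1}, the deterministic SINR $\overline{\gamma}$ is identical for every $k$, so $\overline{R}_{\rm sum}=K\log(1+\overline{\gamma})$, and maximizing $\overline{R}_{\rm sum}$ over $(\alpha,\beta)$ is equivalent to maximizing $\overline{\gamma}$. I would then introduce the shorthand $v\triangleq(1-\beta)^2\in(0,1]$ and rearrange the fixed-point equation from Corollary~\ref{Co: Th1} to obtain the closed-form identities
\begin{equation}
c_1\alpha e=\frac{r_1\left(1+ev(1-c_1)\right)}{1+ev},\qquad \alpha=\frac{r_1\left(1+ev(1-c_1)\right)}{c_1 e(1+ev)}.\nonumber
\end{equation}
Setting $s\triangleq ev$ and plugging these into the formula \eqref{eq:Degamma1} for $\overline{\gamma}$ will reveal (after cancelling a common $(1+s)^{-2}$) the crucial one-variable reduction
\begin{equation}
\overline{\gamma}(s)=\frac{\rho c_1 r_1^2\left[(1+s)^2-c_1 s^2\right]}{\rho r_1^2\left(1+s(1-c_1)\right)^2+\nu_0(1+s)^2}.\nonumber
\end{equation}
Since the map $(e,v)\mapsto(\alpha,\beta)$ is invertible on the relevant domain, maximizing $\overline{\gamma}$ over $(\alpha,\beta)\in(0,\infty)\times[0,1)$ is equivalent to maximizing $\overline{\gamma}(s)$ over $s>0$.

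Next I would differentiate $\overline{\gamma}(s)$ and set $N'(s)D(s)-N(s)D'(s)=0$ where $N,D$ are the numerator and denominator above. The algebra is the main obstacle, but it collapses neatly using the abbreviations $P\triangleq 1+s(1-c_1)$, $Q\triangleq 1+s$, $N\triangleq Q^2-c_1 s^2$ together with the two identities $P^2-N(1-c_1)=c_1$ and $PQ-N=-c_1 s$, which I would verify by direct expansion. These reduce the first-order condition to
\begin{equation}
\rho r_1^{2}\left(1+s^{\ast}(1-c_1)\right)=\nu_0\, s^{\ast}(1+s^{\ast}),\nonumber
\end{equation}
which is a quadratic in $s^{\ast}$ with a unique positive root (one checks that $d\overline{\gamma}/ds>0$ at $s=0$ and $d\overline{\gamma}/ds<0$ for large $s$, so the optimum lies in the interior).

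Finally I would substitute this optimality condition into the expression for $\alpha$ derived from the fixed point:
\begin{equation}
\overline{\alpha}^{\rm opt}=\frac{r_1\left(1+s^{\ast}(1-c_1)\right)}{c_1 e^{\ast}(1+s^{\ast})}=\frac{r_1}{c_1 e^{\ast}(1+s^{\ast})}\cdot\frac{\nu_0 s^{\ast}(1+s^{\ast})}{\rho r_1^{2}}=\frac{\nu_0\, s^{\ast}}{\rho c_1 r_1 e^{\ast}}=\frac{\nu_0\, v^{\ast}}{\rho c_1 r_1},\nonumber
\end{equation}
where the last step uses $s^{\ast}=e^{\ast}v^{\ast}$. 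Recalling $v^{\ast}=(1-\overline{\beta}^{\rm opt})^2$ gives exactly \eqref{eq:optimal alphabeta}. The one non-routine element is the miraculous cancellation in the first-order condition; once the two identities above are in hand, the rest is bookkeeping.
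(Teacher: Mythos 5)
Your derivation is correct and I verified the key algebra: the fixed-point equation of Corollary~\ref{Co: Th1} indeed gives $c_1\alpha e = r_1\bigl(1+s(1-c_1)\bigr)/(1+s)$ with $s=e(1-\beta)^2$, the resulting one-variable form of $\overline{\gamma}(s)$ is right, the two identities $P^2-N(1-c_1)=c_1$ and $PQ-N=-c_1s$ hold, and they collapse the first-order condition to $\rho r_1^2 P = \nu_0 sQ$, from which the back-substitution yields \eqref{eq:optimal alphabeta} exactly. The paper's own proof is a one-line assertion ("by differentiating $\overline{R}_{\rm sum}$ with respect to $\alpha$ and $\beta$"), i.e., a direct two-variable stationarity computation in which $\partial e/\partial\alpha$ and $\partial e/\partial\beta$ must be handled implicitly through the fixed-point equation. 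Your route is genuinely different and arguably better: by observing that $\overline{\gamma}$ depends on $(\alpha,\beta)$ only through the single scalar $s=e(1-\beta)^2$, you avoid implicit differentiation entirely, and you get for free the explanation of the paper's subsequent remark that infinitely many optimal pairs exist (namely all $(\alpha,\beta)$ realizing the same $s^\ast$). One small point you gloss over: "maximizing over $s>0$" is not quite the right feasible set. For fixed $\beta$, as $\alpha$ ranges over $(0,\infty)$ the achievable $s$ ranges over $\bigl(0,\tfrac{1}{c_1-1}\bigr)$ when $c_1>1$ (the value where $P=1+s(1-c_1)$ vanishes), not all of $(0,\infty)$. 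This is harmless because the root of $\rho r_1^2P=\nu_0 sQ$ necessarily lies strictly inside that interval (the left side is positive at $s=0$ and zero at $s=\tfrac{1}{c_1-1}$, while the right side is increasing from zero), but the interiority check deserves a sentence rather than an appeal to invertibility of $(e,v)\mapsto(\alpha,\beta)$.
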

\begin{proof}
By differentiating $\overline{R}_{\rm{sum}}$ with respect to $\alpha$ and $\beta$, we immediately obtain the result from Corollary \ref{Co: Th1}.
\end{proof}

From Proposition \ref{proposition: opt alphabeta}, we note that the number of asymptotically optimal solutions is infinite. All $\alpha$'s and $\beta$'s
that satisfy \eqref{eq:optimal alphabeta} are optimal. This condition will be confirmed in the next section.

Similar to \eqref{eq:gammaRemark}, we consider Case II for brief illustration. In this case, \eqref{eq:optimal alphabeta} can be rewritten as
\begin{equation}\label{eq:alphaRemark}
 \overline{\alpha}^{\rm opt} = \left\{
\begin{aligned}
&\frac{ (1-\overline{\beta}^{\rm opt})^2}{\rho c_1 r_1}, & & 0<\frac{\rho \sigma^2\tr\qR_2}{P_{\rm all}} \leq 1; \\
&\frac{ (1-\overline{\beta}^{\rm opt})^2 \sigma^2\tr\qR_2 }{ c_1 r_1 P_{\rm all}}, & & 1 < \frac{\rho \sigma^2\tr\qR_2}{P_{\rm all}} .
\end{aligned}
\right.
\end{equation}
From \eqref{eq:gammaRemark}, when $0< \rho \sigma^2\tr\qR_2/P_{\rm all} \leq 1$, the system performance is unaffected by the average received
interference power constraint. In this case, $\overline{\beta}^{\rm opt}$ is expected to be close to $0$ because the weak interference at all the PUs is negligible. This condition is combined with the first term of \eqref{eq:alphaRemark} to reveal that $\overline{\alpha}^{\rm opt}$ decreases with
increasing $\rho$, where $\rho = P_T/\sigma^2$ is the same as previously defined. However, when $\rho \sigma^2\tr\qR_2/P_{\rm all} >1$, the system
performance is limited by the average received interference power constraint. To decrease the interference, $\overline{\beta}^{\rm opt}$ is expected to be
close to $1$. Therefore, the second term of \eqref{eq:alphaRemark} reveals that $\alpha$ decreases to $0$ with an increase in $\overline{\beta}^{\rm opt}$.

We end this section by observing two additional extreme cases in Theorem \ref{Th: 2} for $\qR_1 = \qI_K$: If $\beta = 0$, by means of some
algebraic manipulations, we obtain $\overline{\alpha}^{\rm opt} = \nu_0/(c_1\rho)$. By contrast, if $\beta = 1$ and $c_2 \neq 1$, we obtain
$\overline{\alpha}^{\rm opt} = 1/(c_1 \rho)$. We find that the optimal regularization parameter tends to decrease monotonically with increasing $\rho$,
as expected. This characteristic is similar to that of the conventional RZF precoding in \cite{Peel-05Tcom,Nguyen-08GLCOM}, where $r_1= 1$ is assumed and
the asymptotically optimal regularization parameter $\overline{\alpha}^{\rm opt} = 1/(c_1 \rho )$ is derived.

\section{Simulations}\label{Se:simulations}

In this section, we conduct simulations to confirm our analytical results. First, we compare the analytical results \eqref{eq:deterministic equivalent
sum-rate} in Theorem \ref{Th: 2} and the Monte-Carlo simulation results \eqref{eq:the ergodic sum-rate} obtained from averaging over a large number of i.i.d. Rayleigh
fading channels. In the simulations, we set channel path gains $r_{1,k} = 1$ and $r_{2,l} = 0.6$ for all $k$ and $l$ and assume that $P_l = P$ for all $l$ in Case
I and $P_{\rm all} = LP$ in Case II. Several characteristics of Cases I and II are similar. Thus, without loss of generality, we provide the numerical results of Case I only.

\begin{figure}
\begin{center}
\resizebox{4.5 in}{!}{%
\includegraphics*{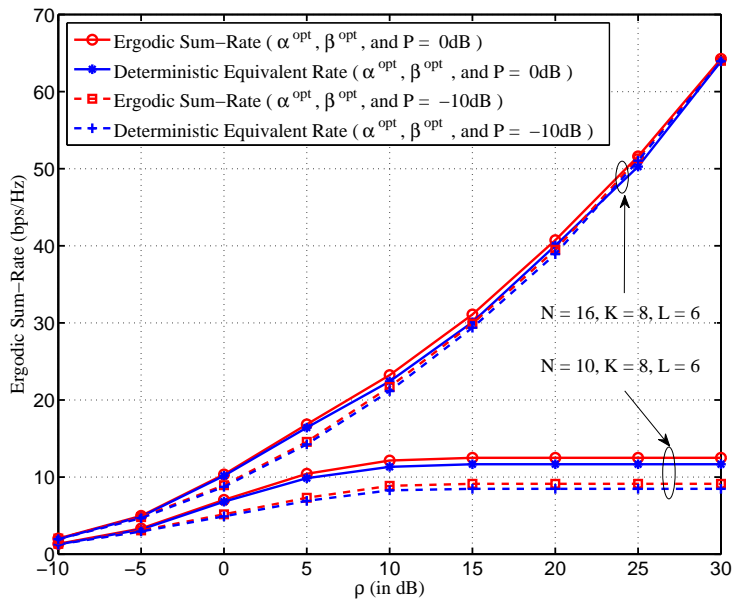} }%
\caption{Ergodic sum-rate and the deterministic equivalent results under different interference power threshold and and two different antenna configuration cases.}\label{fig:2}
\end{center}
\end{figure}

Fig. \ref{fig:2} compares the ergodic sum-rate and its deterministic equivalent result under different interference power thresholds $P\in \{-10 {\rm dB} ,
\, 0{\rm dB} \}$ and two different antenna configuration cases: $\{ N=10,~K=8,~L=6\}$ and $\{N=16,~K=8,~L=6\}$. In the simulation, $\{\alpha^{\rm opt}, \beta^{\rm
opt} \}$ is obtained by using the two-dimensional line search in \eqref{eq:optimal ergodic sum-rate}. We find that the deterministic
equivalent is accurate under various settings even for systems with a not-so-large number of antennas. In addition, Fig. \ref{fig:2} illustrates that for the case
with $\{ N=10,~K=8,~L=6\}$, the sum-rate of the SUs cannot increase linearly in SNR and becomes interference-limited because the sum-rate of the SUs is
easily restricted by the average received interference power at each PU, particularly when the number of active users is larger than the number of
antennas at the BS, that is, $ L+K \geq N$.

\begin{figure}
\begin{center}
\resizebox{4.5 in}{!}{%
\includegraphics*{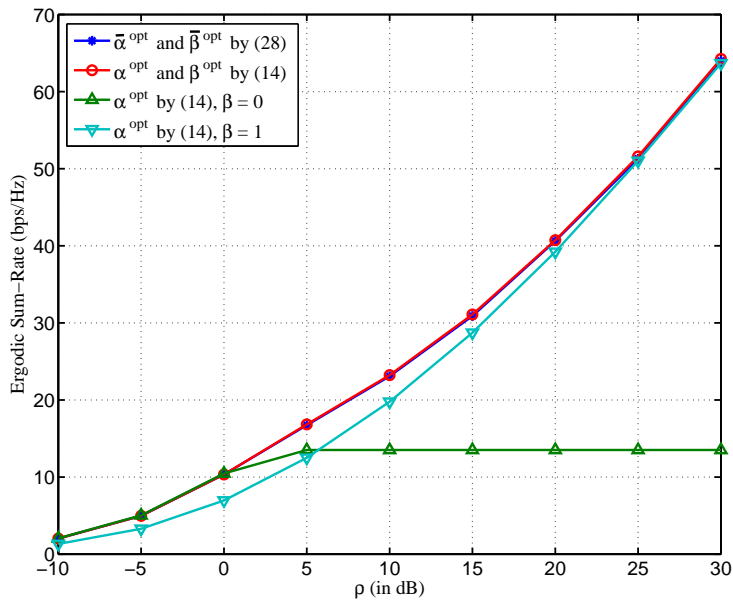} }%
\caption{Ergodic sum-rate results under various parameters with $P=0$dB and $\{N=16,~K=8,~L=6\}$.}\label{fig:3}
\end{center}
\end{figure}

In the above simulations, the best solutions of $\{\alpha^{\rm opt}, \beta^{\rm opt} \}$ are calculated by Monte-Carlo averaging over $10^4$ independent trials; doing so
which clearly results in a high computational cost. To confirm that the optimization based on the deterministic equivalent is not only more
computationally efficient but also near-optimal, we compare the ergodic sum-rate of the PP-RZF precoding with $P=0$dB and $\{ N=16,~K=8,~L=6\}$ in Fig.
\ref{fig:3} for the following four cases: 1) $\{\overline{\alpha}^{\rm opt}, \overline{\beta}^{\rm opt} \}$, 2) $\{\alpha^{\rm opt}, \beta^{\rm opt} \}$, 3)
$\{\alpha^{\rm opt}, \beta=0\}$, and 4) $\{\alpha^{\rm opt}, \beta=1\}$. The solution of $\{\overline{\alpha}^{\rm opt}, \overline{\beta}^{\rm opt} \}$ is obtained by
using the two-dimensional line search in \eqref{eq:optimal de sum-rate}. $\{\overline{\alpha}^{\rm opt}, \overline{\beta}^{\rm opt} \}$ provides
results that are indistinguishable from those achieved by $\{\alpha^{\rm opt}, \beta^{\rm opt} \}$, which demonstrates that the optimization based on the deterministic
equivalent is promising. Moreover, the performance is significantly improved if the PP-RZF precoding with an appropriate choice of $\{\alpha, \beta \}$
is employed. In the low-SNR regime, the optimal transmission becomes the conventional RZF precoding, whereas the optimal transmission is the PP-RZF
precoding with $\beta=1$ in the high-SNR regime.

\begin{figure}
\begin{center}
\resizebox{4.5 in}{!}{%
\includegraphics*{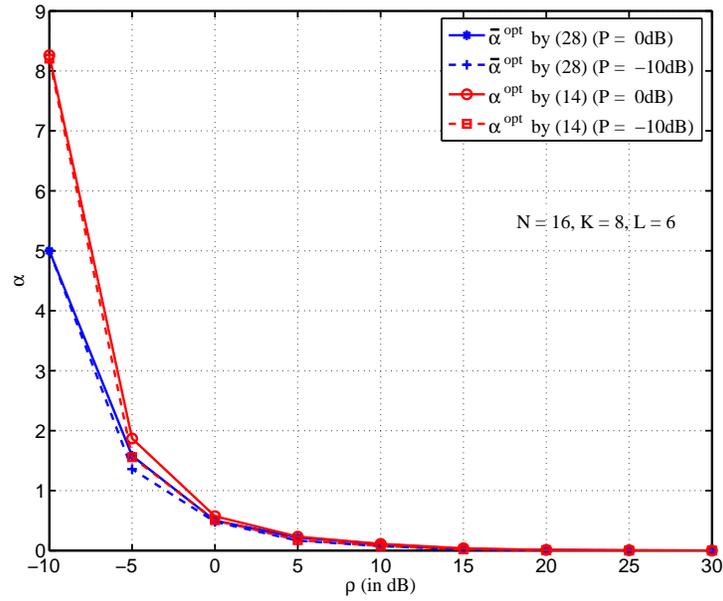} }%
\caption{Optimal $\alpha$ under different the interference power threshold for $\{N=16,~K=8,~L=6\}$.}\label{fig:4}
\end{center}
\end{figure}

\begin{figure}
\begin{center}
\resizebox{4.5 in}{!}{%
\includegraphics*{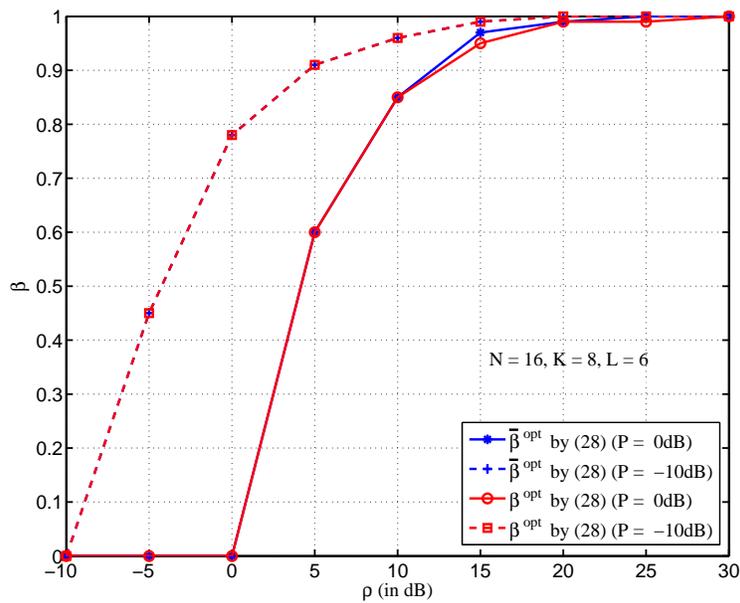} }%
\caption{Optimal $\beta$ under different interference power threshold for $\{N=16,~K=8,~L=6\}$.}\label{fig:5}
\end{center}
\end{figure}

To provide further results on the optimal solutions of $\{\alpha, \beta \}$, Figs. \ref{fig:4} and \ref{fig:5} show the values of $\{\overline{\alpha}^{\rm
opt}, \overline{\beta}^{\rm opt} \}$, $\{\alpha^{\rm opt}, \beta^{\rm opt} \}$ under various settings. We have observed that the optimal parameter
$\{\overline{\alpha}^{\rm opt}, \overline{\beta}^{\rm opt} \}$ based on the deterministic equivalent result is almost consistent with $\{\alpha^{\rm opt},
\beta^{\rm opt} \}$ based on the ergodic sum-rate. Moreover, we have observed that with increasing $\rho$, $\alpha^{\rm opt}$ (or $\overline{\alpha}^{\rm opt}$)
tends to monotonically decrease to $0$, whereas $\beta^{\rm opt}$ (or $\overline{\beta}^{\rm opt}$) tends to monotonically increase from $0$ to $1$. These
characteristics are expected based on the analysis in Section III.

Finally, we confirm the result in Proposition \ref{proposition: opt alphabeta}. Fig. \ref{fig:6} displays the ergodic sum-rate under various parameter settings
with $P=0$dB and $\{N=10,~K=8,~L=10\}$. We find that when $c_2=1$, the parameters that satisfy \eqref{eq:optimal alphabeta} can achieve the
asymptotically optimal sum-rate for any $\beta \in [0,1)$, such that infinitely many asymptotically optimal solutions exist.

\begin{figure}
\begin{center}
\resizebox{4.5 in}{!}{%
\includegraphics*{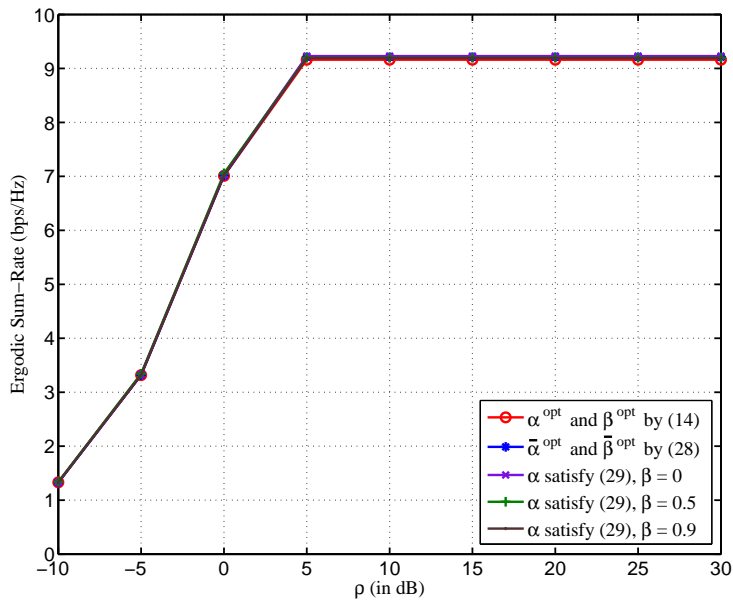} }%
\caption{Ergodic sum-rate results under various parameters for $P=0$dB and $\{N=10,~K=8,~L=10\}$.}\label{fig:6}
\end{center}
\end{figure}

\section{Conclusion}

By exploiting the recent advancements in large dimensional RMT, we investigated downlink multiuser CR networks that consist of multiple SUs and multiple PUs.
The deterministic equivalent of the ergodic sum-rate based on the PP-RZF precoding was derived. Numerical results revealed that the deterministic equivalent sum-rate provides reliable performance predictions even for systems with a not-so-large number of
antennas. We thus used the deterministic equivalent result to identify the asymptotically optimal regularization parameter and the asymptotically optimal projection
control parameter. In addition, we provided the condition that the regularization parameter and the projection control parameter are asymptotically optimal. Several
insights have been gained into the optimal PP-RZF precoding design. A natural extension of this is to consider the PP-RZF precoding under various scenarios, such as spatial correlations and imperfect CSI at the transmitter. However, such development is still ongoing because of mathematical difficulties.


\section*{Appendix A: Proof of Theorem \ref{Th: 2}} \label{Appendix:Th2proof}

To complete this proof, we first introduce the limiting distribution for a new class of random Hermitian matrix in Theorem \ref{Th: 1}. Such distribution serves as the mathematical basis for the latter derivation. We recall the definition of the Stieltjes transform (see, e.g., \cite{SilversteinBai-95}). For a Hermitian matrix $\qB_N \in \bbC^{N\times N}$, the Stieltjes transform of $\qB_N$, is defined as
\begin{equation}
 m_{\qB_N}(\alpha) =\frac{1}{N}\tr \left(\qB_N + \alpha \qI_N\right)^{-1} ~~ \mbox{for}~ \alpha \in {\bbR^+}.  \nonumber
\end{equation}
For ease of explanation, we also define the matrix product Stieltjes transform of $\qB_N$ as
\begin{equation}
 m_{\qB_N, \qQ}(\alpha)=\frac{1}{N}\tr \qQ \left(\qB_N + \alpha \qI_N\right)^{-1}, \nonumber
\end{equation}
where $\qQ$ is any matrix with bounded spectrum norm (with respect to $N$).

Notably, both $m_{\qB_N}(\alpha)$ and $m_{\qB_N, \qQ}(\alpha)$ are functions of $\alpha$, but for ease of notation, $\alpha$ is dropped. In addition, all the
subsequent approximations will be performed under the limit $\calN \rightarrow \infty$, and for ease of expression, $a \asymp b$ denotes that $a-b \xrightarrow{a.s.} 0$ as $\calN \rightarrow \infty$.

\begin{theorem}\label{Th: 1}
Consider an $N \times N$ matrix of the following form:
\begin{equation}\label{eq:B}
 \qB_N = \check{\qH}^H \check{\qH} = \big(\qI_N-\beta\tqW^H\tqW \big) \tqH^H \qR_{1} \tqH \big(\qI_N-\beta\tqW^H\tqW \big),
\end{equation}
where $\tqW$, $\tqH$, and $\qR_{1}$ follow the restrictions given by Assumption \ref{Assum: 1}. Then, as $\calN \rightarrow \infty$, we have
\begin{equation}
  m_{\qB_N, \qQ} \asymp  \frac{ t_1 + t_2}{\alpha}   \frac{1}{N} \tr \qQ, \label{eq:trQB}
\end{equation}
where $t_1 =  \frac{1-c_2}{1+e}$ and $t_2 = \frac{c_2}{1+e(1-\beta)^2}$ with $e$ being the unique solution to the fixed point equation
\begin{align}
  e =& \frac{1}{N} \tr \qR_1 \left(\alpha \qI_K + \left(t_1 + t_2 (1-\beta)^2\right) \qR_1 \right)^{-1}. \label{eq:e}
\end{align}
\end{theorem}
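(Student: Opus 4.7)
The plan is to exploit the rotational invariance of $\tqH$ to decouple the Haar-type randomness of $\tqW^H\tqW$ from the Silverstein-type randomness of $\tqH$, invoke a classical deterministic equivalent for separable Gram matrices, and absorb the residual Haar randomness via concentration.

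\textbf{Step 1 (Rotational reduction).} Conditioning on $\tqW$, the projector $\tqW^H\tqW$ admits the spectral decomposition $\tqW^H\tqW=\qU\qPi\qU^H$ with $\qPi\triangleq\diag(\qI_L,\qzero_{N-L})$ and $\qU$ unitary, so $\qI_N-\beta\tqW^H\tqW=\qU\qD_\beta\qU^H$ with $\qD_\beta\triangleq\diag((1-\beta)\qI_L,\qI_{N-L})$. By right-unitary invariance of i.i.d.\ complex Gaussians, $\tqH\qU\stackrel{d}{=}\tqH$, yielding
\begin{equation*}
\qB_N \stackrel{d}{=} \qU\qM_N\qU^H,\qquad \qM_N\triangleq\qD_\beta\tqH^H\qR_1\tqH\,\qD_\beta,
\end{equation*}
so that $m_{\qB_N,\qQ}=\frac{1}{N}\tr(\qU^H\qQ\qU)(\qM_N+\alpha\qI_N)^{-1}$. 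This step decouples the Gaussian object $\qM_N$ from the rotation $\qU$.

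\textbf{Step 2 (Deterministic equivalent of the sandwich).} Expressing $\qM_N = \sum_{k=1}^{K} r_{1,k}(\qD_\beta\tqh_k)(\qD_\beta\tqh_k)^H$ reveals $\qM_N$ as a standard separable (Kronecker) Gram matrix with ``column'' weights $\qR_1$ and ``row'' weights $\qD_\beta^2$. A Silverstein-type argument---Sherman--Morrison on one rank-one term, Gaussian quadratic-form concentration, and the rank-one perturbation lemma---yields
\begin{equation*}
(\qM_N+\alpha\qI_N)^{-1}\asymp \qT_N \triangleq \frac{1}{\alpha}(\qI_N+e\,\qD_\beta^2)^{-1},
\end{equation*}
with $e>0$ the unique positive solution of \eqref{eq:e}. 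The matching with \eqref{eq:e} is purely algebraic: since $\qD_\beta^2$ takes only the values $(1-\beta)^2$ (multiplicity $L$) and $1$ (multiplicity $N-L$), $\frac{1}{N}\tr\qD_\beta^2\qT_N=(t_1+t_2(1-\beta)^2)/\alpha$, which is precisely the coefficient of $\qR_1$ inside the inverse appearing in \eqref{eq:e}.

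\textbf{Step 3 (Haar average of the two-level diagonal).} Splitting $\qT_N=\frac{1}{\alpha(1+(1-\beta)^2 e)}\qPi+\frac{1}{\alpha(1+e)}(\qI_N-\qPi)$ and undoing the $\qU$-conjugation (via $\qU\qPi\qU^H=\tqW^H\tqW$) gives
\begin{equation*}
m_{\qB_N,\qQ}\asymp \frac{1}{\alpha(1+(1-\beta)^2 e)}\cdot\frac{\tr(\qQ\tqW^H\tqW)}{N}+\frac{1}{\alpha(1+e)}\cdot\frac{\tr(\qQ(\qI_N-\tqW^H\tqW))}{N}.
\end{equation*}
Since $\tqF$ is bi-unitarily invariant, $\tqW^H\tqW$ projects onto a Haar-random $L$-dimensional subspace of $\bbC^N$, and a standard Chebyshev bound on its (Weingarten-computable) second moment yields $\frac{1}{N}\tr(\qQ\tqW^H\tqW)\asymp c_2\cdot\frac{1}{N}\tr\qQ$. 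Combining the two contributions gives $m_{\qB_N,\qQ}\asymp\frac{t_1+t_2}{\alpha}\cdot\frac{1}{N}\tr\qQ$, which is \eqref{eq:trQB}. The main obstacle is Step~2: the separable-covariance deterministic equivalent must be strong enough to be traced against the \emph{random} test matrix $\qU^H\qQ\qU$. The saving grace is that $\qU$ depends only on $\tqW$ and is independent of $\tqH$, so Step~2 can be run almost surely conditional on $\tqW$ and only the benign Haar concentration of Step~3 remains; existence and uniqueness of the positive root $e$ of \eqref{eq:e} follow from the usual monotonicity argument for Stieltjes-type fixed points.
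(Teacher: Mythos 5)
Your proof is correct, and while it shares the paper's overall two-stage ``iterative deterministic equivalent'' structure (handle the Gaussian part conditionally on $\tqW$ via the separable-covariance lemma, then average out the Haar randomness), the organization is genuinely different and arguably cleaner. The paper applies its Lemma~\ref{Lemma:TXXT} directly to $\qB_N$ with the \emph{random} shaping matrix $\qT=(\qI_N-\beta\tqW^H\tqW)^2$, which leaves a $\tqW$-dependent fixed point $\te$ and a $\tqW$-dependent resolvent $\big(\alpha\qI_N+\alpha e(\qI_N-\beta\tqW^H\tqW)^2\big)^{-1}$; it then needs the projection identity $(\tqW^H\tqW)^2=\tqW^H\tqW$, a dedicated Haar resolvent lemma (Lemma~\ref{Lemma:WW}), and the quadratic-form computation $\tqw_l^H(\omega\qI_N+\tqW^H\tqW)^{-1}\tqw_l\asymp 1/(\omega+1)$ to turn these into deterministic quantities. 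Your rotational reduction instead makes the shaping matrix the deterministic two-atom diagonal $\qD_\beta^2$ from the outset, so the fixed point \eqref{eq:e} drops out by pure algebra, and the only Haar input you need is the first-moment concentration $\frac{1}{N}\tr(\qQ\tqW^H\tqW)\asymp c_2\frac{1}{N}\tr\qQ$ --- which is in fact the essential content of the paper's Lemma~\ref{Lemma:WW}, since $(\omega\qI_N+\tqW^H\tqW)^{-1}=\frac{1}{\omega}\qI_N-\frac{1}{\omega(\omega+1)}\tqW^H\tqW$ for a projection. The trade-off is that your Step~1 leans on exact right-unitary invariance of $\tqH$, whereas the paper's conditional application of Lemma~\ref{Lemma:TXXT} would survive non-Gaussian i.i.d.\ entries; under Assumption~\ref{Assum: 1} (Gaussian $\tilde h_{ij}$) nothing is lost. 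Two small points of hygiene: state the reduction as the exact identity $\qB_N=\qU\qM_N\qU^H$ with $\tqH'=\tqH\qU$ (independent of $\qU$ and equal in law to $\tqH$) rather than merely ``$\stackrel{d}{=}$'', since the almost-sure statement concerns the sequence jointly; and note explicitly that $\qU^H\qQ\qU$ has the same spectral norm as $\qQ$, so the conditional deterministic equivalent of Step~2 applies uniformly --- you already flag this, and it is the right thing to check.
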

\begin{proof}
If $c_2 = 1$ (i.e., $N = L$), $\tqW^H \tqW = \qI_L$, the result is directly obtained by Lemma \ref{Lemma:TXXT} (see Appendix C).

We consider the case with $c_2 < 1$. Given that $m_{\qB_N, \qQ}$ is a function of two random matrices $\tqW$ and $\tqH$, we aim to derive an iterative deterministic equivalent \cite{Hoydis-11} of $m_{\qB_N, \qQ}$. In particular, we first find a function $\tilde{g}_N(\tqW,\alpha)$, such that $f_N((\tqH,\tqW),\alpha) \asymp \tilde{g}_N(\tqW,\alpha)$, where
$f_N((\tqH,\tqW),\alpha)\triangleq m_{\qB_N, \qQ}$, and $\tilde{g}_N(\tqW,\alpha)$ is a function of $\tqW$ and is independent of $\{\tqH\}_{N\geq 1}$. Notably,
$\tilde{g}_N(\tqW,\alpha)$ is a deterministic equivalent of $f_N((\tqH,\tqW),\alpha)$ with respect to random matrix sequences $\{\tqH\}_{N\geq 1}$. Second, we further find a function $g_N(\alpha)$, such that $\tilde{g}_N(\tqW,\alpha) \asymp g_N(\alpha)$. Thus, we obtain an iterative deterministic equivalent
$g_N(\alpha)$ of $f_N((\tqH,\tqW),\alpha)$, i.e., $f_N((\tqH,\tqW),\alpha) \asymp g_N(\alpha)$.

When $\tqW$ is treated as a deterministic matrix, applying Lemma \ref{Lemma:TXXT} (see Appendix C), we have
\begin{equation}
 \frac{1}{N}\tr \qQ \left(\qB_N + \alpha \qI_N\right)^{-1} \asymp \frac{1}{N} \tr \qQ \left(\alpha \qI_N +   \alpha e \big(\qI_N-\beta\tqW^H\tqW \big)^2 \right)^{-1},  \label{eq:trQB1}
\end{equation}
where
\begin{align}
 e & = \frac{1}{N} \tr \qR_1 \left(\alpha \qI_K + \te \qR_1 \right)^{-1}, \label{eq:e2} \\
 \te & = \frac{1}{N} \tr \big(\qI_N-\beta\tqW^H\tqW \big)^2 \left( \qI_N + e \big(\qI_N-\beta\tqW^H\tqW \big)^2 \right)^{-1}.  \label{eq:te1}
\end{align}
Notice the fact that $(\tqW^H\tqW)^2 = \tqW^H\tqW$ so \eqref{eq:trQB1} and \eqref{eq:te1} can be written respectively as
\begin{align}
 \frac{1}{N}& \tr \qQ \left(\alpha \qI_{N} + \alpha e \big(\qI_N-\beta\tqW^H\tqW\big)^2 \right)^{-1} = \frac{1}{\alpha(\beta^2-2\beta)e} \frac{1}{N} \tr \qQ \big(\omega\qI_{N} + \tqW^H\tqW \big)^{-1}, \label{eq:trQB2}
\end{align}
and
\begin{align}
 \te = &\frac{1}{(\beta^2-2\beta)e} \frac{1}{N}\tr \big(\omega\qI_{N} + \tqW^H\tqW \big)^{-1} + \frac{1}{e} \frac{1}{N}\sum_{l=1}^L \tqw_l^H \big(\omega\qI_{N} + \tqW^H\tqW \big)^{-1} \tqw_l, \label{eq:te2}
\end{align}
where $\omega \triangleq \frac{1 + e}{(\beta^2-2\beta)e}$ and $\tqw_l$ denotes the $l-$th row of $\tqW$.

Next, we aim to derive the deterministic equivalents of the terms $\frac{1}{N} \tr \qQ (\omega\qI_{N} + \tqW^H\tqW )^{-1}$ and $\tqw_l^H (\omega\qI_{N} + \tqW^H\tqW )^{-1} \tqw_l$. Applying a result of the Haar matrix in Lemma \ref{Lemma:WW} (see Appendix C) to \eqref{eq:trQB2} and combing \eqref{eq:trQB1}, we immediately get \eqref{eq:trQB}. Then, we deal with  the deterministic equivalent of $\tqw_l^H \big(\omega\qI_{N} + \tqW^H\tqW \big)^{-1} \tqw_l$. According to the matrix inverse lemma (see, e.g., \cite[Lemma 2.1]{Bai-09}\footnote{\cite[Lemma 2.1]{Bai-09}: For any $\qA\in \bbC^{n\times n}$ and $\qq \in \bbC^n$ with $\qA$ and $\qA+\qq\qq^H$ invertible, we have $$\qq^H \left( \qA+\qq\qq^H \right)^{-1} = \frac{1}{1+\qq^H\qA^{-1}\qq} \qq^H\qA^{-1}.$$}), we find
\begin{align}
 \tqw_l^H \big( \omega \qI_{N} + \tqW^H\tqW \big)^{-1} \tqw_l = \frac{\tqw_l^H \big( \omega \qI_{N} + \tqW_{[l]}^H\tqW_{[l]} \big)^{-1} \tqw_l}{1 + \qw_l^H \big( \omega \qI_{N} + \tqW_{[l]}^H\tqW_{[l]} \big)^{-1} \tqw_l},
\end{align}
where $\tqW_{[l]} \triangleq [ \tqw_1, \ldots, \tqw_{l-1}, \tqw_{l+1}, \ldots, \tqw_L ]^H \in \bbC^{(L-1)\times N}$. Then, the trace lemma for isometric matrices
\cite{Debbah-03TIT,Couillet-12IT} gives us
\begin{align}
 \tqw_l^H \big(\omega\qI_{N} + \tqW_{[l]}^H\tqW_{[l]} \big)^{-1} \tqw_l  \asymp & \frac{1}{N-L} \tr \big(\qI_N-\tqW_{[l]}^H\tqW_{[l]}\big) \big(\omega\qI_{N} + \tqW_{[l]}^H\tqW_{[l]} \big)^{-1} \nonumber \\
 = & \frac{1+ \omega}{N-L} \tr \big(\omega\qI_{N} + \tqW_{[l]}^H \tqW_{[l]} \big)^{-1} - \frac{N}{N-L}.  \label{eq:wWlWlw}
\end{align}
Now, applying \cite[Lemma 2.2]{Bai-09} and \eqref{eq:lemmaQI} to \eqref{eq:wWlWlw}, we get
\begin{align}
 \tqw_l^H \big( \omega \qI_{N} + \tqW^H\tqW \big)^{-1} \tqw_l \asymp \frac{1}{\omega+1}.  \label{eq:wWWw}
\end{align}
Substituting \eqref{eq:wWWw} into \eqref{eq:te2} and using \eqref{eq:lemmaQI} and \eqref{eq:e2}, we obtain \eqref{eq:e}.
\end{proof}

Note that $m_{\qB_N, \qQ}$, $e$, $t_1$, and $t_2$ are all functions of $\alpha$ and $\beta$, but for ease of expression, $\alpha$ and $\beta$ are dropped.

Theorem \ref{Th: 1} indicates that $m_{\qB,\qQ}$ can be approximated by its deterministic equivalent $ \frac{ t_1 + t_2}{\alpha} \frac{1}{N} \tr \qQ$ without knowing the actual realization of channel random components. The deterministic equivalent is analytical and is much easier to compute than $\Ex_{\qB}\{ m_{\qB,\qQ} \}$, which requires time-consuming Monte-Carlo simulations. Motivated by this result in the large system limit, we aim to derive the deterministic equivalent of $\gamma_{k}$.

The SINR $\gamma_k$ in \eqref{eq:SINR} consists of three terms: $(\textrm{i})$ the signal power $| \qh_k^H ( \check{\qH}^H \check{\qH} + \alpha \qI_N)^{-1} \check{\qh}_k |^2$, $(\textrm{ii})$ the interference power $\qh_k^H ( \check{\qH}^H \check{\qH} + \alpha \qI_N )^{-1} \check{\qH}_{[k]}^H \qP_{[k]} \check{\qH}_{[k]} (\check{\qH}^H \check{\qH} + \alpha \qI_N )^{-1} \qh_k$, and $(\textrm{iii})$ the noise power $\nu$. Using Theorem \ref{Th: 1}, we establish the following three lemmas to derive the deterministic equivalent of each term, whose proofs are detailed in Appendices B-I, B-II, and B-III, successively.

\begin{lemma}\label{Lemma:Deterministic of signal power}
Under the assumption of Theorem \ref{Th: 1}, as $\calN \rightarrow \infty$, we have
\begin{equation}\label{eq:Deterministic of signal power}
 \qh_k^H \left( \check{\qH}^H \check{\qH} + \alpha \qI_N \right)^{-1} \check{\qh}_k \asymp \overline{a}_k,
\end{equation}
where $\overline{a}_k$ has been obtained by \eqref{eq:Dea}.
\end{lemma}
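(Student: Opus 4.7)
The plan is to isolate the contribution of $\qh_k$ by a Sherman--Morrison rank-one update, reduce the resulting quadratic forms to normalized traces via the trace lemma, and evaluate those traces in closed form by combining the intermediate resolvent equivalence from the proof of Theorem \ref{Th: 1} with the projector structure of $\qW^H\qW$. Concretely, I would write $\check{\qH}^H\check{\qH} = \check{\qH}_{[k]}^H\check{\qH}_{[k]} + \check{\qh}_k\check{\qh}_k^H$, set $\qA_k \triangleq \check{\qH}_{[k]}^H\check{\qH}_{[k]}+\alpha\qI_N$, and apply the matrix inversion lemma to obtain
\begin{equation*}
\qh_k^H\left(\check{\qH}^H\check{\qH}+\alpha\qI_N\right)^{-1}\check{\qh}_k = \frac{\qh_k^H\qA_k^{-1}\check{\qh}_k}{1+\check{\qh}_k^H\qA_k^{-1}\check{\qh}_k}.
\end{equation*}
Since $\qA_k$ is independent of $\qh_k$ and $\qh_k=\sqrt{r_{1,k}}\,\tqh_k$ has i.i.d.\ entries of variance $1/N$, the Bai--Silverstein trace lemma gives $\qh_k^H\qA_k^{-1}\check{\qh}_k \asymp r_{1,k}N^{-1}\tr[\qA_k^{-1}(\qI_N-\beta\qW^H\qW)]$ and $\check{\qh}_k^H\qA_k^{-1}\check{\qh}_k \asymp r_{1,k}N^{-1}\tr[\qA_k^{-1}(\qI_N-\beta\qW^H\qW)^2]$. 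A standard rank-one perturbation estimate then lets me replace $\qA_k^{-1}$ by $(\check{\qH}^H\check{\qH}+\alpha\qI_N)^{-1}$ in each trace without altering the asymptotic value.

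Conditioning on $\tqW$ so that $\qI_N-\beta\qW^H\qW$ becomes deterministic, the intermediate equivalence \eqref{eq:trQB1} from the proof of Theorem \ref{Th: 1} reduces the two traces above to the single form
\begin{equation*}
\frac{1}{N}\tr\qQ\left(\alpha\qI_N+\alpha e(\qI_N-\beta\qW^H\qW)^2\right)^{-1},\qquad \qQ\in\{\qI_N-\beta\qW^H\qW,\ (\qI_N-\beta\qW^H\qW)^2\}.
\end{equation*}
Since $\qW^H\qW$ is an orthogonal projector of rank $L$, the inverse decomposes spectrally as
\begin{equation*}
\left(\alpha\qI_N+\alpha e(\qI_N-\beta\qW^H\qW)^2\right)^{-1} = \frac{\qI_N-\qW^H\qW}{\alpha(1+e)} + \frac{\qW^H\qW}{\alpha(1+e(1-\beta)^2)},
\end{equation*}
and the polynomial identities $(\qI_N-\beta\qW^H\qW)^m(\qI_N-\qW^H\qW)=\qI_N-\qW^H\qW$ and $(\qI_N-\beta\qW^H\qW)^m\qW^H\qW=(1-\beta)^m\qW^H\qW$ hold for $m=1,2$. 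Together with $N^{-1}\tr\qW^H\qW=c_2$, these give $N^{-1}\tr[(\check{\qH}^H\check{\qH}+\alpha\qI_N)^{-1}(\qI_N-\beta\qW^H\qW)^m]\asymp [t_1+(1-\beta)^m t_2]/\alpha$; plugging the two values ($m=1$ in the numerator, $m=2$ in the denominator) into the Sherman--Morrison ratio reproduces $\overline{a}_k$ exactly.

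The main obstacle is that the \emph{final} form \eqref{eq:trQB} of Theorem \ref{Th: 1} is obtained only after averaging out $\tqW$ via the Haar-matrix lemma and so cannot be invoked directly when the weighting $\qQ$ is itself a function of $\tqW$. My remedy is to back up to the intermediate step \eqref{eq:trQB1}, which was derived treating $\tqW$ as deterministic and hence holds pointwise in $\tqW$; the projector algebra above then handles the $\tqW$-dependence exactly, so no further concentration argument is required beyond $N^{-1}\tr\qW^H\qW\to c_2$. A secondary technicality is the rank-one perturbation step, controlled by the standard bound $|N^{-1}\tr[\qQ(\qA^{-1}-(\qA+\qx\qx^H)^{-1})]|=O(N^{-1})$, which is uniform here because $\|\qA_k^{-1}\|\le 1/\alpha$ and $\|\check{\qh}_k\|$ is almost surely bounded.
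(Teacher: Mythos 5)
Your proof is correct and follows the same skeleton as the paper's: the Sherman--Morrison decomposition \eqref{eq:hcHcHch}, the trace lemma plus a rank-one perturbation bound to reduce the two quadratic forms to $r_{1,k}N^{-1}\tr\big[(\check{\qH}^H\check{\qH}+\alpha\qI_N)^{-1}(\qI_N-\beta\qW^H\qW)^m\big]$ for $m=1,2$, and substitution of the resulting limits into the ratio. The only place you diverge is in evaluating those traces: the paper splits them as $\tr(\cdot)^{-1}+(\text{const})\,\tr\,\qW^H\qW(\cdot)^{-1}$, cites Theorem \ref{Th: 1} for the first piece, and re-runs the Haar isometric trace-lemma computation of \eqref{eq:te2} to get $N^{-1}\tr\,\qW^H\qW(\check{\qH}^H\check{\qH}+\alpha\qI_N)^{-1}\asymp t_2/\alpha$; you instead back up to the intermediate equivalence \eqref{eq:trQB1} (valid conditionally on $\tqW$) and exploit the fact that $(\qI_N-\beta\qW^H\qW)^m$ is a polynomial in the rank-$L$ projector $\qW^H\qW$, so the trace is evaluated by exact spectral algebra with no further concentration argument. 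Both routes give $[t_1+(1-\beta)^m t_2]/\alpha$ and hence \eqref{eq:Dea}; your variant is slightly cleaner in that it makes explicit why the final statement of Theorem \ref{Th: 1} (which requires $\qQ$ independent of $\tqW$) cannot be cited directly here --- a point the paper handles only implicitly with ``the same approach as \eqref{eq:te2}.''
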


\begin{lemma}\label{Lemma:Deterministic of interference power}
Under the assumption of Theorem \ref{Th: 1}, as $\calN \rightarrow \infty$, we have
\begin{equation}\label{eq:Deterministic of interference power}
  \qh_k^H \left( \check{\qH}^H \check{\qH} + \alpha \qI_N \right)^{-1} \check{\qH}_{[k]}^H \check{\qH}_{[k]} \left( \check{\qH}^H \check{\qH} + \alpha \qI_N \right)^{-1} \qh_k \asymp \overline{b}_k,
\end{equation}
where $\overline{b}_k$ has been obtained by \eqref{eq:Deb}.
\end{lemma}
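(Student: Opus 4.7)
The plan is to reduce the target bilinear form to elementary quadratic forms in $\qh_k$ against matrices independent of $\qh_k$, apply the trace lemma together with Theorem \ref{Th: 1}, and finally handle the appearance of $\qA^{-2}$ via a derivative-with-respect-to-$\alpha$ trick. Let $\qA \triangleq \check{\qH}^H\check{\qH} + \alpha \qI_N$ and use $\check{\qH}_{[k]}^H\check{\qH}_{[k]} = \check{\qH}^H\check{\qH} - \check{\qh}_k\check{\qh}_k^H = \qA - \alpha \qI_N - \check{\qh}_k\check{\qh}_k^H$ to obtain the clean identity
\begin{equation*}
  b_k \;=\; \qh_k^H \qA^{-1} \qh_k \;-\; \alpha\, \qh_k^H \qA^{-2} \qh_k \;-\; \big|\qh_k^H \qA^{-1} \check{\qh}_k\big|^2.
\end{equation*}
By Lemma \ref{Lemma:Deterministic of signal power}, the third term is asymptotically $\overline{a}_k^2$, so only the first two need further work.

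For $\qh_k^H \qA^{-1}\qh_k$, introduce $\qA_{[k]} \triangleq \check{\qH}_{[k]}^H\check{\qH}_{[k]} + \alpha\qI_N$ (independent of $\qh_k$) and apply Sherman--Morrison to the rank-one split $\qA = \qA_{[k]} + \check{\qh}_k\check{\qh}_k^H$. With $\check{\qI} \triangleq \qI_N - \beta\qW^H\qW$, this produces quadratic forms of the type $\qh_k^H \check{\qI}^i \qA_{[k]}^{-1} \check{\qI}^i\qh_k$ for $i=0,1$. The trace lemma reduces each to $r_{1,k}\tfrac{1}{N}\tr(\check{\qI}^{2i}\qA_{[k]}^{-1})$, and a standard rank-one perturbation argument replaces $\qA_{[k]}^{-1}$ by $\qA^{-1}$ inside these normalized traces. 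Theorem \ref{Th: 1}, via its intermediate Haar-projection decomposition $(\omega\qI_N + \tqW^H\tqW)^{-1} = \omega^{-1}(\qI_N - \tqW^H\tqW) + (\omega+1)^{-1}\tqW^H\tqW$, then evaluates $\tfrac{1}{N}\tr \qA^{-1}$, $\tfrac{1}{N}\tr\check{\qI}\qA^{-1}$, and $\tfrac{1}{N}\tr\check{\qI}^2\qA^{-1}$ as $(t_1+t_2)/\alpha$, $(t_1+(1-\beta)t_2)/\alpha$, and $(t_1+(1-\beta)^2 t_2)/\alpha$, respectively; the first of these gives the $\overline{a}_k$-form directly.

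The term $\qh_k^H \qA^{-2}\qh_k$ is then captured using $\qA^{-2} = -\partial_\alpha \qA^{-1}$: since the deterministic equivalent of $\qh_k^H\qA^{-1}\qh_k$ is an explicit function of $(\alpha,e,t_1,t_2,\beta)$, differentiating it in $\alpha$ gives the deterministic equivalent of $-\qh_k^H\qA^{-2}\qh_k$. The implicit $\alpha$-dependence of $e$ through the fixed-point equation \eqref{eq:e1} produces, by implicit differentiation, the closed form $\partial e/\partial \alpha$ in \eqref{eq:partial e}, which emerges as the global multiplicative factor in $\overline{b}_k$.

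The main obstacle is the final algebraic simplification. After collecting all pieces, one must show that they reorganize into the projection-subspace split seen in \eqref{eq:Deb}: a contribution $(1-\overline{a}_k)^2 t_1/(1+e)$ coming from the complement of $\qW^H\qW$, and a contribution $(1-(1-\beta)\overline{a}_k)^2(1-\beta)^2 t_2/(1+(1-\beta)^2 e)$ coming from the range of $\qW^H\qW$. This split is dictated by the fact that $\check{\qI}$ acts as the identity on the complement and as multiplication by $(1-\beta)$ on the range, so every bilinear form with $\check{\qI}$ decomposes into two parallel pieces; the recombination, using $\overline{a}_k = r_{1,k}(t_1+(1-\beta)t_2)/\bigl(\alpha + r_{1,k}(t_1+(1-\beta)^2 t_2)\bigr)$ from Lemma \ref{Lemma:Deterministic of signal power}, produces the compact stated form. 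A secondary technical point is justifying the interchange of the $\calN \to \infty$ limit and the $\alpha$-derivative, which follows from standard equicontinuity arguments used in RMT and will be routine in this setting.
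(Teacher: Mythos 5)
Your proposal is correct and follows essentially the same route as the paper: the same rank-one decomposition $\check{\qH}_{[k]}^H\check{\qH}_{[k]} = \check{\qH}^H\check{\qH} - \check{\qh}_k\check{\qh}_k^H$, reduction to quadratic forms via the matrix inverse lemma and trace lemma, evaluation of the resulting traces through Theorem \ref{Th: 1}, and the $-\partial_\alpha$ trick for the squared resolvent (the paper merely applies that derivative to the leave-one-out quantities $\qh_k^H(\check{\qH}_{[k]}^H\check{\qH}_{[k]}+\alpha\qI_N)^{-2}\qh_k$ rather than to $\qh_k^H\qA^{-2}\qh_k$ directly, an immaterial difference). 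The final algebraic recombination into \eqref{eq:Deb} that you flag as the remaining obstacle is likewise left implicit in the paper, so your sketch is at the same level of completeness.
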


\begin{lemma}\label{Lemma:Deterministic of noise power}
Under the assumption of Theorem \ref{Th: 1}, as $\calN \rightarrow \infty$, we have
\begin{equation}\label{eq:Deterministic of noise power}
 \nu \asymp \overline{\nu} ,
\end{equation}
where $\overline{\nu}$ can be obtained by \eqref{eq:Dev} for Case I and by \eqref{eq:Dev2} for Case II.
\end{lemma}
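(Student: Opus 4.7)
The lemma asserts $\nu \asymp \overline{\nu}$, where both sides are maxima of several quantities, so by continuity of $\max$ it suffices to prove componentwise almost-sure convergence of each random quantity to its deterministic counterpart; the expectations in $\nu$ are then handled by uniform integrability, which holds because the relevant resolvent-based quantities are a.s. bounded for $\alpha>0$. The three ingredients are (i) the normalized trace $\frac{1}{N}\tr(\qM^{-1}\qB_N\qM^{-1})$ with $\qM:=\qB_N+\alpha\qI_N$, (ii) the per-PU quadratic form $\qf_l^H\qM^{-1}\qB_N\qM^{-1}\qf_l$ (Case I), and (iii) the total trace $\tr(\qF\qM^{-1}\qB_N\qM^{-1}\qF^H)$ (Case II). For all three, my starting point is the resolvent identity $\qM^{-1}\qB_N\qM^{-1}=\qM^{-1}-\alpha\qM^{-2}$, which converts these second-order quantities into first-order resolvents together with their $\alpha$-derivatives.

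For (i) I would use $\frac{1}{N}\tr(\qM^{-1}\qB_N\qM^{-1}) = m_{\qB_N}(\alpha) + \alpha\,\partial_\alpha m_{\qB_N}(\alpha) = \partial_\alpha[\alpha\, m_{\qB_N}(\alpha)]$. Theorem~\ref{Th: 1} with $\qQ=\qI_N$ gives $\alpha\,m_{\qB_N}(\alpha)\asymp t_1+t_2$. Since both sides are analytic in $\alpha$ on $\bbR^+$ and uniformly bounded on compacts, a.s. convergence lifts to a.s. convergence of derivatives via Vitali's theorem. The chain rule, with $\partial t_1/\partial e=-t_1/(1+e)$ and $\partial t_2/\partial e=-(1-\beta)^2 t_2/(1+(1-\beta)^2 e)$, then yields $-\bigl[\frac{t_1}{1+e}+\frac{(1-\beta)^2 t_2}{1+(1-\beta)^2 e}\bigr]\partial_\alpha e$, which matches the first entry of $\overline{\nu}$ under the paper's sign convention for $\partial e/\partial \alpha$ in \eqref{eq:partial e}.

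The hard part is the quadratic form in (ii), since $\qf_l$ and $\qM$ both depend on $\qF$ and cannot be treated as independent. The key structural fact I would exploit is $\tqW^H\tqW\,\qf_l=\qf_l$ (because $\qf_l$ lies in the row space of $\qF$), which gives $(\qI_N-\beta\tqW^H\tqW)\qf_l=(1-\beta)\qf_l$. Introducing the unitary $\qU=[\tqW^H,\qZ^H]$, with $\qZ$ an orthonormal basis of the null space of $\qF$, the conjugated matrix $\qU^H\qM\qU$ takes a $2\times 2$ block form whose blocks are $(1-\beta)^2\qH_W^H\qH_W+\alpha\qI_L$, $(1-\beta)\qH_W^H\qH_Z$, and $\qH_Z^H\qH_Z+\alpha\qI_{N-L}$, where $\qH_W:=\qH\tqW^H$ and $\qH_Z:=\qH\qZ^H$. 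By rotational invariance of the Gaussian channel, $[\qH_W,\qH_Z]$ is equal in distribution to $\qR_1^{1/2}[\qH_1,\qH_2]$ with $\qH_1\in\bbC^{K\times L}$ and $\qH_2\in\bbC^{K\times(N-L)}$ independent standard Gaussians; this decouples the row-space and null-space pieces from the dependence on $\qF$. Block inversion together with the Woodbury identity $\qI_K-\qH_Z(\qH_Z^H\qH_Z+\alpha\qI)^{-1}\qH_Z^H=\alpha(\qH_Z\qH_Z^H+\alpha\qI_K)^{-1}$ collapses the upper-left block of $\qU^H\qM^{-1}\qU$ to
\begin{equation*}
\tqW\qM^{-1}\tqW^H=\tfrac{1}{\alpha}\bigl[(1-\beta)^2\,\qH_W^H(\qH_Z\qH_Z^H+\alpha\qI_K)^{-1}\qH_W+\qI_L\bigr]^{-1}.
\end{equation*}
Writing $\qf_l=\sqrt{r_{2,l}}\,\tqW^H(\tqF\tqF^H)^{1/2}\qe_l$ then turns $\qf_l^H\qM^{-k}\qf_l$ into bilinear forms in the above $L\times L$ object (and its $\alpha$-derivative for $k=2$). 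I would apply the trace lemma for the vector $(\tqF\tqF^H)^{1/2}\qe_l$ together with the iterated deterministic-equivalent technique of Theorem~\ref{Th: 1} to evaluate them, then collapse the algebra using the resolvent identity to obtain $\tfrac{r_{2,l}}{c_2}\cdot\tfrac{(1-\beta)^2 t_2}{1+(1-\beta)^2 e}\cdot\partial_\alpha e$. Case (iii) follows by summing (ii) over $l$, which replaces $r_{2,l}$ by $\tr\qR_2$, finishing the proof.
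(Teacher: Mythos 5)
Your proposal is correct, and for the trace term and the final passage from almost-sure limits to the expectations defining $\nu$ it runs parallel to the paper (the paper writes $(\check{\qH}^H\check{\qH}+\alpha\qI_N)^{-1}\check{\qH}^H\check{\qH}(\check{\qH}^H\check{\qH}+\alpha\qI_N)^{-1}=(\check{\qH}^H\check{\qH}+\alpha\qI_N)^{-1}-\alpha(\check{\qH}^H\check{\qH}+\alpha\qI_N)^{-2}$ and differentiates the deterministic equivalent of Theorem~\ref{Th: 1}; it closes the expectation gap by a martingale concentration argument rather than your uniform-integrability argument, but these play the same role). Where you genuinely diverge is the per-PU quadratic form. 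The paper does not block-diagonalize: it conditions on $\qF$, applies the intermediate deterministic equivalent \eqref{eq:trQB1}--\eqref{eq:trQB2} from the proof of Theorem~\ref{Th: 1} with $\qQ$ built from $\qf_l\qf_l^H$, and then evaluates the resulting deterministic expression \emph{exactly} by the matrix inversion formula, which amounts to the single observation you also make --- that $\qf_l$ lies in the row space of $\qF$, hence is an eigenvector of $\qW^H\qW$ with eigenvalue $1$, so $(\omega\qI_N+\qW^H\qW)^{-1}\qf_l=\frac{1}{\omega+1}\qf_l$ and $\qf_l^H(\check{\qH}^H\check{\qH}+\alpha\qI_N)^{-1}\qf_l\asymp \frac{r_{2,l}t_2}{c_2\alpha}$ in one line. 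Your route instead rotates to the basis $[\tqW^H,\qZ^H]$, invokes Gaussian rotational invariance to decouple the row-space and null-space blocks, and recovers $\tqW(\check{\qH}^H\check{\qH}+\alpha\qI_N)^{-1}\tqW^H$ by a Schur-complement/Woodbury computation before applying trace lemmas; this is a self-contained rederivation that does not lean on Theorem~\ref{Th: 1}'s machinery, at the cost of being substantially heavier and of using Gaussianity of $\tqH$ more explicitly. Both arguments land on $\frac{r_{2,l}}{c_2}\frac{\partial t_2}{\partial\alpha}$ and Case~II by summation, so the proposal is a valid, if longer, alternative proof.
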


According to Lemma \ref{Lemma:Deterministic of signal power}, Lemma \ref{Lemma:Deterministic of interference power}, and Lemma \ref{Lemma:Deterministic of noise
power}, we obtain the deterministic equivalent $\overline{\gamma}_k$ of $\gamma_k$ in \eqref{eq:gamma1 deterministic equivalent}. The proof is then completed.

\section*{Appendix B: Proofs of Lemma \ref{Lemma:Deterministic of signal power}, Lemma \ref{Lemma:Deterministic of interference power}, and Lemma \ref{Lemma:Deterministic of noise power}}

\subsection*{B-I: Proof of Lemma \ref{Lemma:Deterministic of signal power}}

We start from an application of the matrix inverse lemma \cite[Lemma 2.1]{Bai-09} to the signal term, which results in
\begin{align}
 \qh_k^H \left( \check{\qH}^H \check{\qH} + \alpha \qI_N \right)^{-1} \check{\qh}_k =  \frac{\qh_k^H \big( \check{\qH}_{[k]}^H \check{\qH}_{[k]} + \alpha \qI_N \big)^{-1} \check{\qh}_k}{1+\check{\qh}_k^H \big( \check{\qH}_{[k]}^H \check{\qH}_{[k]} + \alpha \qI_N \big)^{-1} \check{\qh}_k}. \label{eq:hcHcHch}
\end{align}
Using \cite[Lemma 2.3 and Lemma 2.2]{Bai-09}, we obtain
\begin{align}
  \qh_k^H \left( \check{\qH}_{[k]}^H \check{\qH}_{[k]} + \alpha \qI_N \right)^{-1} \check{\qh}_k
 \asymp & r_{1,k}\frac{1}{N} \tr \left( \check{\qH}^H \check{\qH} + \alpha \qI_N \right)^{-1} -  r_{1,k} \beta \frac{1}{N} \tr \qW^H \qW \left( \check{\qH}^H \check{\qH} + \alpha \qI_N \right)^{-1}. \label{eq:hBch}
\end{align}
Similarly,
\begin{align}
  & \check{\qh}_k^H \left( \check{\qH}_{[k]}^H \check{\qH}_{[k]} + \alpha \qI_N \right)^{-1} \check{\qh}_k  \nonumber \\
 \asymp & r_{1,k} \frac{1}{N} \tr \left( \check{\qH}^H \check{\qH} + \alpha \qI_N \right)^{-1} +  r_{1,k} (\beta^2-2\beta) \frac{1}{N} \tr \qW^H \qW \left( \check{\qH}^H \check{\qH} + \alpha \qI_N \right)^{-1}. \label{eq:chBch}
\end{align}
According to Theorem \ref{Th: 1}, we have
\begin{align}
 \frac{1}{N} \tr \left( \check{\qH}^H \check{\qH} + \alpha \qI_N \right)^{-1} \asymp \frac{t_1 + t_2 }{\alpha}. \label{eq:cHcH}
\end{align}
Noticing that $\qW^H\qW = \tqW^H\tqW$ and by using the same approach as \eqref{eq:te2}, we obtain
\begin{align}
 \frac{1}{N} \tr \qW^H \qW \big( \check{\qH}^H \check{\qH} + \alpha \qI_N \big)^{-1}
 \asymp & \frac{1}{\alpha(\beta^2-2\beta)e} \frac{1}{N} \sum_{l=1}^L \tqw_l^H \big(\omega\qI_{N} + \tqW^H\tqW \big)^{-1} \tqw_l  \asymp \frac{t_2}{\alpha}. \label{eq:WWcHcHI}
\end{align}
Substituting \eqref{eq:cHcH} and \eqref{eq:WWcHcHI} into \eqref{eq:hBch} and \eqref{eq:chBch}, we obtain
\begin{align}
 \qh_k^H \left( \check{\qH}_{[k]}^H \check{\qH}_{[k]} + \alpha \qI_N \right)^{-1} \check{\qh}_k &\asymp \frac{r_{1,k} \left( t_1 + t_2(1-\beta) \right)}{\alpha}, \label{eq:DehcHcHch}\\
 \check{\qh}_k^H \left( \check{\qH}_{[k]}^H \check{\qH}_{[k]} + \alpha \qI_N \right)^{-1} \check{\qh}_k &\asymp \frac{r_{1,k} \left( t_1 + t_2(1-\beta)^2 \right)}{\alpha}. \label{eq:DechcHcHch}
\end{align}
Consequently, the expression of \eqref{eq:hcHcHch}, together with \eqref{eq:DehcHcHch} and \eqref{eq:DechcHcHch}, yields \eqref{eq:Deterministic of signal power}.

\subsection*{B-II: Proof of Lemma \ref{Lemma:Deterministic of interference power}}

Using the fact that $\qA^{-1}-\qD^{-1}=-\qA^{-1}(\qA-\qD)\qD^{-1}$, we have
\begin{align}
  &\qh_k^H \left( \check{\qH}^H \check{\qH} + \alpha \qI_N \right)^{-1} \check{\qH}_{[k]}^H \check{\qH}_{[k]} \left( \check{\qH}^H \check{\qH} + \alpha \qI_N \right)^{-1} \qh_k  \nonumber\\
 =&\qh_k^H \left( \check{\qH}^H \check{\qH} + \alpha \qI_N \right)^{-1} \qh_k - \alpha \qh_k^H \left( \check{\qH}_{[k]}^H \check{\qH}_{[k]} + \alpha \qI_N \right)^{-1} \left( \check{\qH}^H \check{\qH} + \alpha \qI_N \right)^{-1} \qh_k\nonumber\\
 &- \qh_k^H \left( \check{\qH}^H \check{\qH} + \alpha \qI_N \right)^{-1} \check{\qh}_k \check{\qh}_k^H \left( \check{\qH}^H \check{\qH} + \alpha \qI_N \right)^{-1} \qh_k  \nonumber\\
 &+ \alpha \qh_k^H \left( \check{\qH}^H \check{\qH} + \alpha \qI_N \right)^{-1} \check{\qh}_k \check{\qh}_k^H \left( \check{\qH}_{[k]}^H \check{\qH}_{[k]} + \alpha \qI_N \right)^{-1} \left( \check{\qH}^H \check{\qH} + \alpha \qI_N \right)^{-1} \qh_k.  \label{eq:hchHIchHkchHIh}
\end{align}
Applying the matrix inverse lemma,  we obtain
\begin{align}
 & \qh_k^H \left( \check{\qH}^H \check{\qH} + \alpha \qI_N \right)^{-1} \qh_k \nonumber\\
 = & \qh_k^H \left( \check{\qH}_{[k]}^H \check{\qH}_{[k]} + \alpha \qI_N \right)^{-1} \qh_k - \frac{\qh_k^H \big( \check{\qH}_{[k]}^H \check{\qH}_{[k]} + \alpha \qI_N \big)^{-1} \check{\qh}_k \check{\qh}_k^H \big( \check{\qH}_{[k]}^H \check{\qH}_{[k]} + \alpha \qI_N \big)^{-1} \qh_k}{1+ \check{\qh}_k^H  \big( \check{\qH}_{[k]}^H \check{\qH}_{[k]} + \alpha \qI_N \big)^{-1} \check{\qh}_k}. \label{eq:hcHcHh}
\end{align}
Similarly,
\begin{align}
 &\qh_k^H \big( \check{\qH}_{[k]}^H \check{\qH}_{[k]} + \alpha \qI_N \big)^{-1} \left( \check{\qH}^H \check{\qH} + \alpha \qI_N \right)^{-1} \qh_k  \nonumber\\
 = & \qh_k^H \left( \check{\qH}_{[k]}^H \check{\qH}_{[k]} + \alpha \qI_N \right)^{-2} \qh_k - \frac{\qh_k^H \big( \check{\qH}_{[k]}^H \check{\qH}_{[k]} + \alpha \qI_N \big)^{-2} \check{\qh}_k \check{\qh}_k^H \big( \check{\qH}_{[k]}^H \check{\qH}_{[k]} + \alpha \qI_N \big)^{-1} \qh_k}{1+ \check{\qh}_k^H  \big( \check{\qH}_{[k]}^H \check{\qH}_{[k]} + \alpha \qI_N \big)^{-1} \check{\qh}_k},  \label{eq:hcHcHcHcHh}
\end{align}
and
\begin{align}
 &\check{\qh}_k^H \big( \check{\qH}_{[k]}^H \check{\qH}_{[k]} + \alpha \qI_N \big)^{-1} \big( \check{\qH}^H \check{\qH} + \alpha \qI_N \big)^{-1} \qh_k  \nonumber\\
 = & \check{\qh}_k^H \left( \check{\qH}_{[k]}^H \check{\qH}_{[k]} + \alpha \qI_N \right)^{-2} \qh_k - \frac{\check{\qh}_k^H \big( \check{\qH}_{[k]}^H \check{\qH}_{[k]} + \alpha \qI_N \big)^{-2} \check{\qh}_k \check{\qh}_k^H \big( \check{\qH}_{[k]}^H \check{\qH}_{[k]} + \alpha \qI_N \big)^{-1} \qh_k}{1+ \check{\qh}_k^H  \big( \check{\qH}_{[k]}^H \check{\qH}_{[k]} + \alpha \qI_N \big)^{-1} \check{\qh}_k}.\label{eq:chcHcHcHcHh}
\end{align}
According to Theorem \ref{Th: 1}, we have
\begin{align}
 \qh_k^H \left( \check{\qH}_{[k]}^H \check{\qH}_{[k]} + \alpha \qI_N \right)^{-1} \qh_k \asymp \frac{r_{1,k} \left( t_1 + t_2 \right)}{\alpha}. \label{eq:DehcHcHh}
\end{align}
Noticing that
\begin{align}
 \qh_k^H \left( \check{\qH}_{[k]}^H \check{\qH}_{[k]} + \alpha \qI_N \right)^{-2} \qh_k = - \frac{\partial}{\partial\alpha} \qh_k^H \left( \check{\qH}_{[k]}^H \check{\qH}_{[k]} + \alpha \qI_N \right)^{-1} \qh_k,\nonumber
\end{align}
we thus obtain
\begin{align}
 \qh_k^H \left( \check{\qH}_{[k]}^H \check{\qH}_{[k]} + \alpha \qI_N \right)^{-2} \qh_k \asymp - r_{1,k} \frac{\partial }{\partial\alpha}  \left(\frac{ t_1 + t_2 }{\alpha}\right). \label{eq:DehcHcH2h}
\end{align}
Similarly, combining \eqref{eq:DehcHcHch} and \eqref{eq:DechcHcHch} yields
\begin{align}
 \check{\qh}_k^H \left( \check{\qH}_{[k]}^H \check{\qH}_{[k]} + \alpha \qI_N \right)^{-2} \qh_k &\asymp - r_{1,k} \frac{\partial }{\partial\alpha}  \left(\frac{ t_1 + (1-\beta)t_2 }{\alpha}\right),  \label{eq:DechcHcH2h} \\
 \check{\qh}_k^H \left( \check{\qH}_{[k]}^H \check{\qH}_{[k]} + \alpha \qI_N \right)^{-2} \check{\qh}_k &\asymp - r_{1,k} \frac{\partial }{\partial\alpha}  \left(\frac{ t_1 + (1-\beta)^2 t_2 }{\alpha}\right).  \label{eq:DechcHcH2ch}
\end{align}
Substituting \eqref{eq:DehcHcHch}, \eqref{eq:DechcHcHch}, \eqref{eq:DehcHcHh}, \eqref{eq:DehcHcH2h}, \eqref{eq:DechcHcH2h}, and \eqref{eq:DechcHcH2ch} into \eqref{eq:hcHcHh}, \eqref{eq:hcHcHcHcHh}, and \eqref{eq:chcHcHcHcHh}, and combining \eqref{eq:Deterministic of signal power} and \eqref{eq:hchHIchHkchHIh}, we obtain \eqref{eq:Deterministic of interference power}.

\subsection*{B-III: Proof of Lemma \ref{Lemma:Deterministic of noise power}}

From \eqref{eq:nu}, we first have
\begin{align}
 & \frac{1}{N} \tr\left( \check{\qH}^H \check{\qH} + \alpha \qI_N \right)^{-1} \check{\qH}^H \check{\qH} \left( \check{\qH}^H \check{\qH} + \alpha \qI_N \right)^{-1} \nonumber \\
 = & \frac{1}{N} \tr \left( \check{\qH}^H \check{\qH} + \alpha \qI_N \right)^{-1} - \alpha \frac{1}{N} \tr \left( \check{\qH}^H \check{\qH} + \alpha \qI_N \right)^{-2},
\end{align}
which, together with Theorem \ref{Th: 1}, yields
\begin{align}
 \frac{1}{N} \tr \left( \check{\qH}^H \check{\qH} + \alpha \qI_N \right)^{-1} \check{\qH}^H \check{\qH} \left( \check{\qH}^H \check{\qH} + \alpha \qI_N \right)^{-1} \asymp \frac{\partial t_1}{\partial \alpha}+\frac{\partial t_2}{\partial \alpha} . \label{eq:cHcHcHcH}
\end{align}
For Case I, we have
\begin{align}
 &\qf_l^H \left( \check{\qH}^H \check{\qH} + \alpha \qI_N \right)^{-1} \check{\qH}^H \check{\qH} \left( \check{\qH}^H \check{\qH} + \alpha \qI_N \right)^{-1} \qf_l \nonumber \\
 = & \qf_l^H \left( \check{\qH}^H \check{\qH} + \alpha \qI_N \right)^{-1} \qf_l - \alpha \qf_l^H \left( \check{\qH}^H \check{\qH} + \alpha \qI_N \right)^{-2} \qf_l,  \label{eq:fcHcHcHcHf}
\end{align}
where $l = 1,\ldots,L$. From \eqref{eq:trQB1} and \eqref{eq:trQB2}, we obtain
\begin{align}
 \qf_l^H \left( \check{\qH}^H \check{\qH} + \alpha \qI_N \right)^{-1} \qf_l \asymp \frac{1}{\alpha(\beta^2-2\beta)e} \tr \qf_l \qf_l^H \left(\omega\qI_{N} + \qW^H\qW \right)^{-1}.
\end{align}
Noticing the fact that $\qW = (\qF\qF^H)^{-\frac{1}{2}}\qF$, by using the matrix inversion formula\footnote{
For invertible $\qA,\qB$ and $\qR$ matrices, suppose that $\qB=\qA+\qX\qR\qY$, then $\qB^{-1} = \qA^{-1}-\qA^{-1}\qX(\qR^{-1}+\qY\qA^{-1}\qX)^{-1}\qY\qA^{-1}$.}, we obtain
\begin{align}
 \tr \qf_l \qf_l^H \left(\omega\qI_{N} + \qW^H\qW \right)^{-1}
 = & \tr  \qf_l \qf_l^H \left(\omega^{-1}\qI_{N} - \omega^{-1} \qF^H \left(\qF\qF^H + \omega^{-1} \qF \qF^H\right)^{-1} \qF \omega^{-1} \right)  \nonumber \\
 = & \frac{1}{\omega+1} \tr \qf_l \qf_l^H   \asymp  \frac{r_{2,l}}{\omega+1}.
\end{align}
As a result,
\begin{align}
 \qf_l^H \left( \check{\qH}^H \check{\qH} + \alpha \qI_N \right)^{-1} \qf_l \asymp \frac{r_{2,l} t_2}{c_2\alpha}.
\end{align}
Combining this with \eqref{eq:fcHcHcHcHf}, we obtain
\begin{align}
 &\qf_l^H \left( \check{\qH}^H \check{\qH} + \alpha \qI_N \right)^{-1} \check{\qH}^H \check{\qH} \left( \check{\qH}^H \check{\qH} + \alpha \qI_N \right)^{-1} \qf_l \asymp \frac{r_{2,l}}{c_2} \frac{\partial t_2}{\partial \alpha}. \label{eq:fcHcHcHcHf2}
\end{align}
From \eqref{eq:cHcHcHcH} and \eqref{eq:fcHcHcHcHf2}, the proof of \eqref{eq:Dev} can be accomplished using
\begin{subequations}\label{eq:m_Exm}
\begin{align}
 &\frac{1}{N} \tr \big( \check{\qH}^H \check{\qH} + \alpha \qI_N \big)^{-1} \check{\qH}^H \check{\qH} \big( \check{\qH}^H \check{\qH} + \alpha \qI_N \big)^{-1}\nonumber \\
 \asymp & \Ex \left\{ \frac{1}{N} \tr \big( \check{\qH}^H \check{\qH} + \alpha \qI_N \big)^{-1} \check{\qH}^H \check{\qH} \big( \check{\qH}^H \check{\qH} + \alpha \qI_N \big)^{-1} \right\},  \\
 &\qf_l^H \big( \check{\qH}^H \check{\qH} + \alpha \qI_N \big)^{-1} \check{\qH}^H \check{\qH} \big( \check{\qH}^H \check{\qH} + \alpha \qI_N \big)^{-1} \qf_l \nonumber \\
 \asymp & \Ex \left\{ \qf_l^H \big( \check{\qH}^H \check{\qH} + \alpha \qI_N \big)^{-1} \check{\qH}^H \check{\qH} \big( \check{\qH}^H \check{\qH} + \alpha \qI_N \big)^{-1} \qf_l \right\}.
\end{align}
\end{subequations}
By using the martingale approach, we can prove \eqref{eq:m_Exm} (See \cite{WenCK-13TIT} for a similar application).

Similarly, for Case II, we have
\begin{align}
 \Ex \left\{ \tr \qF \big( \check{\qH}^H \check{\qH} + \alpha \qI_N \big)^{-1} \check{\qH}^H \check{\qH} \big( \check{\qH}^H \check{\qH} + \alpha \qI_N \big)^{-1} \qF^H \right\} \asymp \frac{\tr \qR_2}{c_2} \frac{\partial t_2}{\partial \alpha}. \label{eq:FcHcHcHcHF}
\end{align}
Therefore, we obtain \eqref{eq:Dev2}.

\section*{Appendix C: Related Lemmas}\label{Appendix:Related Lemmas}

In this appendix, we provide some lemmas needed in the proof of Appendix A.

\begin{lemma}\label{Lemma:TXXT}
Let $\qX \equiv [\frac{1}{\sqrt{N}}X_{ij}] \in \bbC^{N \times K}$, where $X_{ij}$'s are i.i.d. with zero mean, unit variance and finite $4$-th order moment. In addition, let $\qQ \in \bbC^{N
\times N}$, $\qT \in \bbC^{N\times N}$, and $\qR \in \bbC^{K\times K}$ be nonnegative definite matrices with uniformly bounded spectral norm (with respect to $N$, $N$, and $K$, respectively).
Consider an $N \times N$ matrix of the form $\qB_N = \qT^\frac{1}{2} \qX \qR \qX^H \qT^\frac{1}{2}$. Define $c_1 = N/K$. Then, as $K, N \rightarrow \infty$ such that $0<\lim\inf_N c_1
\leq \lim\sup_N c_1 <\infty$, the following holds for any $\omega \in \bbR^+$:
\begin{equation}
 \frac{1}{N} \tr \qQ \big( \qB_N + \omega \qI_{N} \big)^{-1} \asymp \frac{1}{N} \tr \qQ \left(\omega \qI_{N} + \omega e \qT\right)^{-1},
\end{equation}
where $e$ is given as the unique solution to the fixed-point equations
\begin{align}
 e &= \frac{1}{N} \tr \qR (\omega \qI_K +  \te \qR )^{-1}, \nonumber \\
 \te &= \frac{1}{N} \tr \qT ( \qI_N +  e \qT )^{-1}.  \nonumber
\end{align}
\end{lemma}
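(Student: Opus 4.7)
The plan is to apply the standard Bai–Silverstein resolvent/trace–lemma methodology for deterministic equivalents of bi-correlated sample covariance matrices. Let $\qC(\omega) \triangleq (\qB_N + \omega\qI_N)^{-1}$ denote the random resolvent and $\qD(\omega) \triangleq (\omega\qI_N + \omega e\qT)^{-1}$ the candidate deterministic equivalent. The starting point is the resolvent identity
\[
 \qC - \qD \;=\; \qC\,\bigl(\qD^{-1}-\qC^{-1}\bigr)\,\qD \;=\; \qC\,(\omega e\,\qT - \qB_N)\,\qD,
\]
from which the goal becomes to show $\frac{1}{N}\tr \qQ(\qC-\qD) \asymp 0$ for an arbitrary $\qQ$ of uniformly bounded spectral norm. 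Without loss of generality take $\qR = \diag(r_1,\ldots,r_K)$ (otherwise absorb the eigenvectors of $\qR$ into $\qX$, whose i.i.d.\ column structure is preserved after such rotation); writing $\qy_k = \qT^{\frac{1}{2}}\qx_k$ with $\qx_k$ the $k$-th column of $\qX$ then yields $\qB_N = \sum_{k=1}^K r_k\,\qy_k\qy_k^H$.

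Taking the weighted trace on both sides of the resolvent identity produces the two-term expression $\omega e\,\frac{1}{N}\tr \qQ\qC\qT\qD - \frac{1}{N}\sum_k r_k\,\qy_k^H\qD\qQ\qC\qy_k$. I would handle each quadratic form by the Sherman–Morrison identity: with $\qC_k \triangleq (\qB_N - r_k\qy_k\qy_k^H + \omega\qI_N)^{-1}$ independent of $\qx_k$, one has $\qC\qy_k = \qC_k\qy_k/(1+r_k \qy_k^H\qC_k\qy_k)$. The trace lemma for i.i.d.\ vectors \cite[Lemma 2.3]{Bai-09} then gives $\qy_k^H\qC_k\qy_k \asymp \frac{1}{N}\tr \qT\qC_k$ and $\qy_k^H\qD\qQ\qC_k\qy_k \asymp \frac{1}{N}\tr \qT\qD\qQ\qC_k$, and the rank-one perturbation lemma \cite[Lemma 2.2]{Bai-09} replaces $\qC_k$ by $\qC$ in these normalized traces at cost $O(1/N)$. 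Setting $\bar e_N \triangleq \frac{1}{N}\tr \qT\qC$, the sum over $k$ collapses to $\bigl(\frac{1}{N}\sum_k r_k/(1+r_k \bar e_N)\bigr)\cdot \frac{1}{N}\tr \qT\qD\qQ\qC$. The two contributions in the trace difference cancel precisely when the scalar $e$ is chosen so that $\omega e = \frac{1}{N}\sum_k r_k/(1+r_k \bar e_N)$; specializing $\qQ = \qT$ then identifies $\bar e_N \asymp \frac{1}{N}\tr \qT\qD = \te/\omega$, which reduces the matching condition to $e = \frac{1}{N}\tr \qR(\omega\qI_K + \te\qR)^{-1}$ together with $\te = \frac{1}{N}\tr \qT(\qI_N + e\qT)^{-1}$, i.e., precisely the fixed-point system in the statement.

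The main obstacle is a clean treatment of this coupled fixed-point system. I would establish existence and uniqueness of a solution $(e,\te) \in (\bbR^+)^2$ by noting that, for each fixed $\omega > 0$, both $e(\te) \triangleq \frac{1}{N}\tr \qR(\omega\qI_K + \te\qR)^{-1}$ and $\te(e) \triangleq \frac{1}{N}\tr \qT(\qI_N + e\qT)^{-1}$ are continuous, strictly decreasing, positive maps that are uniformly bounded on $\bbR^+$, so their composition is a strict contraction on a suitable compact interval; this standard monotone-analytic argument is packaged in \cite[Ch.~6]{Couillet-11BOOK}. Almost-sure convergence of all the trace-lemma and rank-one perturbation errors to zero follows directly from the hypotheses: $\qX$ has i.i.d.\ entries with bounded fourth moment, while $\qT$ and $\qR$ have uniformly bounded spectral norms. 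Combining uniqueness of the deterministic fixed point with the approximate identity satisfied by the empirical pair $(\bar e_N, \bar \te_N)$—and invoking a standard continuity/perturbation argument to conclude that they converge to $(e,\te)$—closes the argument and gives $\frac{1}{N}\tr \qQ\qC \asymp \frac{1}{N}\tr \qQ\qD$, as claimed.
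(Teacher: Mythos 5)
Your argument is correct in outline, but it takes a genuinely different route from the paper: the paper's entire proof of this lemma is a one-line citation, observing that the statement is a special case of \cite[Theorem 1]{ZhangJun-13JSAC} or \cite[Theorem 1]{Wagner-12IT}, whereas you re-derive the deterministic equivalent from scratch via the Bai--Silverstein resolvent method (resolvent identity, Sherman--Morrison, trace lemma, rank-one perturbation, then stability of the coupled fixed point). Your derivation is essentially the proof of the cited theorems themselves, correctly specialized, and the cancellation condition you obtain, $\omega e = \frac{1}{N}\sum_k r_k/(1+r_k \bar{e}_N)$ with $\bar{e}_N \asymp \tilde{e}/\omega$, does reproduce the stated fixed-point system; what the citation buys the paper is that the technical steps you compress into your final sentences are already carried out rigorously in the references. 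Two of those compressed steps deserve a flag. First, the reduction ``without loss of generality $\qR$ diagonal by absorbing its eigenvectors into $\qX$'' is distribution-preserving only for Gaussian entries; under the lemma's weaker i.i.d.\ finite-fourth-moment hypothesis the columns of $\qX\qU$ are no longer mutually independent, which is exactly the independence the trace lemma needs between column $k$ and $\qC_k$ --- the references handle general $\qR$ directly (this is harmless for the paper's application, where $\qR_1$ is diagonal and the entries Gaussian by Assumption 1). Second, the composition of the two decreasing maps $e(\tilde{e})$ and $\tilde{e}(e)$ is increasing and is not in general a strict contraction; existence and uniqueness follow from the standard monotonicity/Stieltjes-transform argument of \cite[Ch.~6]{Couillet-11BOOK} rather than from a literal Banach fixed-point argument. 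Neither point is fatal, but a fully self-contained proof would need to address both.
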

\begin{proof}
As a special case of \cite[Theorem 1]{ZhangJun-13JSAC} or \cite[Theorem 1]{Wagner-12IT}, the result can be obtained immediately.
\end{proof}

\begin{lemma}\label{Lemma:WW}
Let $\qQ \in \bbC^{N \times N}$ be a nonnegative definite matrix with uniformly bounded spectral norm (with respect to $N$) and $\tqW \in \bbC^{L\times N}$ be
$L \leq N$ rows of an $N \times N$ Haar-distributed unitary random matrix. Define $c_2 = L/N$. Then, as $L, N \rightarrow \infty$ such that $0<\lim\inf_N c_2 \leq
\lim\sup_N c_2 \leq 1$, the following holds for any $\omega \in \bbR^+$:
\begin{equation}
 \frac{1}{N} \tr \qQ \big( \tqW^H \tqW + \omega \qI_{N} \big)^{-1} \asymp \left( \frac{c_2}{\omega+1} + \frac{1-c_2}{\omega} \right) \frac{1}{N} \tr \qQ. \label{eq:trQWW}
\end{equation}
\end{lemma}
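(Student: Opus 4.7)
\textbf{Proof proposal for Lemma \ref{Lemma:WW}.}

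The plan is to exploit the fact that $\tqW^H\tqW$ is an orthogonal projection onto the $L$-dimensional subspace $\mathrm{span}(\tqW^H)$. Specifically, since $\tqW \tqW^H = \qI_L$ (rows of a Haar unitary are orthonormal), the matrix $\qP \triangleq \tqW^H \tqW$ satisfies $\qP^2 = \qP$ and has $L$ unit eigenvalues and $N-L$ zero eigenvalues. The key algebraic observation is that the resolvent of a projection admits a closed form on its two eigenspaces:
\begin{equation}
 \big(\qP + \omega \qI_N\big)^{-1} = \frac{1}{\omega+1}\,\qP \;+\; \frac{1}{\omega}\,(\qI_N - \qP), \nonumber
\end{equation}
which one checks directly using $\qP^2=\qP$. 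Multiplying by $\qQ$ and taking the normalized trace, one obtains
\begin{equation}
 \frac{1}{N}\tr \qQ \big(\tqW^H\tqW + \omega \qI_N\big)^{-1}
 \;=\; \frac{1}{\omega+1}\cdot\frac{1}{N}\tr \qQ\,\tqW^H\tqW \;+\; \frac{1}{\omega}\cdot\frac{1}{N}\tr \qQ\,(\qI_N - \tqW^H\tqW). \nonumber
\end{equation}
Hence the lemma reduces to proving
\begin{equation} \label{eq:ProjCVG}
 \frac{1}{N}\tr \qQ\,\tqW^H\tqW \;\asymp\; c_2\,\frac{1}{N}\tr \qQ,
\end{equation}
since the complementary term then follows by subtraction from $\frac{1}{N}\tr\qQ$.

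To establish \eqref{eq:ProjCVG}, I would first compute the expectation. By the left/right invariance of the Haar measure, $\Ex\{\tqW^H\tqW\} = \frac{L}{N}\qI_N = c_2 \qI_N$ (every diagonal entry is equal by symmetry and the trace is $L$; off-diagonal entries vanish by a sign-flip symmetry). Hence $\Ex\{\tfrac{1}{N}\tr \qQ\,\tqW^H\tqW\} = c_2\,\tfrac{1}{N}\tr \qQ$ exactly. The almost-sure convergence in \eqref{eq:ProjCVG} then follows from the concentration of linear functionals of Haar unitary matrices: writing $\tqW^H\tqW = \qU \qLambda \qU^H$ with $\qU$ Haar on $\mathbb U(N)$ and $\qLambda = \diag(\qI_L, \qzero_{N-L})$, the quantity $\tfrac{1}{N}\tr(\qU^H \qQ \qU) \qLambda$ is a Lipschitz function of $\qU$ whose fluctuations vanish as $N\to\infty$ (a standard application of the isometric trace lemma, \cite[e.g., Lemma ...]{Couillet-11BOOK,Debbah-03TIT,Couillet-12IT}, which is precisely the tool already invoked around \eqref{eq:wWlWlw} in the main proof).

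Putting these pieces together,
\begin{equation}
 \frac{1}{N}\tr \qQ\big(\tqW^H\tqW + \omega\qI_N\big)^{-1}
 \;\asymp\; \left(\frac{c_2}{\omega+1} + \frac{1-c_2}{\omega}\right)\frac{1}{N}\tr \qQ, \nonumber
\end{equation}
which is exactly \eqref{eq:trQWW}. The only non-trivial step is the concentration in \eqref{eq:ProjCVG}, but it is completely routine given existing Haar-matrix concentration results, and in fact can be viewed as a direct specialization of the isometric trace lemma already cited in Appendix A. No fixed-point equation or iterative deterministic equivalent is needed here, because the spectrum of $\tqW^H\tqW$ is known exactly; this is what makes Lemma~\ref{Lemma:WW} substantially simpler than Lemma~\ref{Lemma:TXXT}.
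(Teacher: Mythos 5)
Your proof is correct, but it follows a genuinely different route from the paper's. The paper first treats $\qQ=\qI_N$ via the rank identity for Stieltjes transforms together with $\tqW\tqW^H=\qI_L$ (which actually yields the constant $\delta=\frac{c_2}{\omega+1}+\frac{1-c_2}{\omega}$ as an exact, deterministic identity), and then handles general $\qQ$ by writing the difference $\frac{1}{N}\tr\qQ(\tqW^H\tqW+\omega\qI_N)^{-1}-\delta\frac{1}{N}\tr\qQ$ via the resolvent identity, peeling off rows with the matrix-inversion lemma, applying the isometric trace lemma to each quadratic form $\qw_l^H\qQ(\tqW^H_{[l]}\tqW_{[l]}+\omega\qI_N)^{-1}\qw_l$, and solving the resulting self-consistent relation. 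You instead observe that $\qP=\tqW^H\tqW$ is an orthogonal projection and use the exact resolvent formula $(\qP+\omega\qI_N)^{-1}=\frac{1}{\omega+1}\qP+\frac{1}{\omega}(\qI_N-\qP)$, which collapses the entire lemma to the single linear-statistics concentration $\frac{1}{N}\tr\qQ\qP\asymp c_2\frac{1}{N}\tr\qQ$. This is cleaner and more transparent: it isolates exactly where the randomness enters and makes it evident why no fixed-point equation is needed. The one place where your write-up is slightly loose is the final concentration step: the isometric trace lemma as cited applies to a single row against a matrix independent of that row, so invoking it for $\sum_{l}\tqw_l^H\qQ\tqw_l$ requires either a short self-consistency computation (since $\tr\tqW_{[l]}^H\tqW_{[l]}\qQ$ reappears on the right-hand side) or, more directly, Lipschitz concentration of $\qU\mapsto\frac{1}{N}\tr(\qU\qLambda\qU^H\qQ)$ on the unitary group, which gives Gaussian-type tails at scale $N^{-1}$ and hence almost-sure convergence by Borel--Cantelli. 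Either way the step is standard and your identification of the expectation $\Ex\{\tqW^H\tqW\}=c_2\qI_N$ is correct, so the argument is complete; it trades the paper's resolvent-perturbation machinery for a single, well-known concentration fact.
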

\begin{proof}
Since $\tqW^H \tqW = \qI_L$ for $c_2 = 1$ (i.e., $N = L$), \eqref{eq:trQWW} evidently holds. We assume $c_2 < 1$ in the following proof.
Firstly, we consider a special case with $\qQ = \qI$. Using the identity of the Stieltjes transform \cite[Lemma 3.1]{Couillet-11BOOK} \footnote{\cite[Lemma 3.1]{Couillet-11BOOK}: Let $\qA \in \bbC^{N\times n}$, $\qB \in \bbC^{n\times N}$, such that $\qA\qB$ is Hermitian. Then, for $z \in \bbC \backslash \bbR$ $$\frac{n}{N} m_{\qB\qA}(z) = m_{\qA\qB}(z) + \frac{N-n}{N} \frac{1}{z}.$$}, we have
\begin{align}
 \frac{1}{N} \tr \big( \tqW^H \tqW + \omega \qI_{N} \big)^{-1} = \frac{c_2}{L} \tr \big( \tqW \tqW^H + \omega \qI_L \big)^{-1} + \frac{1-c_2}{\omega}.
\end{align}
Notice that the rows of $\tqW$ are orthogonal and hence $\tqW \tqW^H = \qI_L$. Therefore,
\begin{align}
 \frac{1}{N} \tr \big( \tqW^H \tqW + \omega \qI_{N} \big)^{-1} \asymp  \delta \triangleq \frac{c_2}{\omega+1} + \frac{1-c_2}{\omega}.  \label{eq:lemmaQI}
\end{align}

Next, for any nonnegative definite matrix with uniformly bounded spectral norm (with respect to $N$) $\qQ$, we have
\begin{align}
   &\frac{1}{N} \tr \qQ \big( \tqW^H \tqW + \omega \qI_N \big)^{-1} - \delta \frac{1}{N} \tr \qQ    \nonumber \\
 = & \left( 1 - \delta \omega \right) \frac{1}{N} \tr \qQ \big( \tqW^H \tqW + \omega \qI_N \big)^{-1} - \delta \sum_{l=1}^L \qw_l^H \qQ \big( \tqW^H \tqW + \omega \qI_N \big)^{-1} \qw_l, \label{eq:lemmaQ1}
\end{align}
where the first equality follows from the resolvent identity: $\qA^{-1}-\qB^{-1} = \qA^{-1} (\qB - \qA) \qB^{-1}$ for invertible matrices $\qA$ and $\qB$. Using the matrix inverse lemma \cite[Lemma 2.1]{Bai-09}, the trace lemma for isometric matrices \cite{Debbah-03TIT,Couillet-12IT}, and the fact that $\qQ$ has uniformly bounded spectral norm (with respect to $N$), we obtain
\begin{align}
  \sum_{l=1}^L \qw_l^H \qQ \big( \tqW^H \tqW + \omega \qI_N \big)^{-1} \qw_l
   = & \sum_{l=1}^L \frac{\qw_l^H \qQ \big( \tqW^H_{[l]} \tqW_{[l]} + \omega \qI_N \big)^{-1} \qw_l}{1+\qw_l^H \big( \tqW^H_{[l]} \tqW_{[l]} + \omega \qI_N \big)^{-1} \qw_l} \nonumber \\
 \asymp & c_2 \frac{ (1+\omega)\frac{1}{N} \tr \qQ \big( \tqW^H \tqW + \omega \qI_N \big)^{-1} - \frac{1}{N} \tr \qQ}{ (1+\omega)\frac{1}{N} \tr \big( \tqW^H \tqW + \omega \qI_N \big)^{-1} - c_2}. \label{eq:sumwQw}
\end{align}
Substituting \eqref{eq:sumwQw} into \eqref{eq:lemmaQ1}, and combining \eqref{eq:lemmaQI}, yields
\begin{align}
 \frac{(1+\omega)\delta \left( \frac{1}{N} \tr \qQ \big( \tqW^H \tqW + \omega \qI_N \big)^{-1} - \delta \frac{1}{N} \tr \qQ \right)}{(1+\omega)\delta -c_2} \asymp 0.
\end{align}
Therefore, we get \eqref{eq:trQWW}.
\end{proof}



\bibliographystyle{IEEEtran}

\end{document}